\newcommand{\R}{\mathbb{R}}
\newcommand{\prob}[1]{\mathbb{P}\left[#1\right]}
\newcommand{\vbold}{\boldsymbol{v}}
\newcommand{\LL}{\mathbb{L}}
\newcommand{\N}{\mathbb{N}}
\newcommand{\X}{\mathcal{X}}
\newcommand{\abold}{\boldsymbol{\alpha}}
\newcommand{\bbold}{\boldsymbol{\beta}}
\newcommand{\lbold}{\boldsymbol{\lambda}}
\newcommand{\xbold}{\boldsymbol{x}}
\newcommand{\ebold}{\boldsymbol{e}}
\newcommand{\ubold}{\boldsymbol{u}}
\newcommand{\ybold}{\boldsymbol{y}}
\newcommand{\tbold}{\boldsymbol{\tau}}
\newcommand{\s}{\mathbb{S}}
\newcommand{\D}{\mathbb{D}}
\newcommand{\dbold}{\boldsymbol{d}}
\newcommand{\ineq}{\varepsilon}
\newcommand{\norm}[1]{\left\lVert#1\right\rVert}
\newcommand{\abs}[1]{\left\lvert#1\right\rvert}
\renewcommand{\baselinestretch}{1}
\begin{document}

\title{Naive Diversification Preferences and their Representation}

\author{{\bf Enrico G. De Giorgi}\footnote{Department of Economics, School of Economics and Political Science, University of St. Gallen, Bodanstrasse 6, 9000 St. Gallen, Switzerland, Tel. +41 71 2242430,
Fax. +41 71 224 28 94, email:
enrico.degiorgi@unisg.ch.} \and {\bf
Ola Mahmoud}\footnote{Faculty of Mathematics and Statistics, School of Economics and Political Science, University of St. Gallen, Bodanstrasse 6, 9000 St. Gallen, Switzerland and Center for Risk Management Research, University of California, Berkeley, Evans Hall, CA 94720-3880, USA, email: ola.mahmoud@unisg.ch}
}
\date{\today}
\maketitle

\linespread{1}
 \begin{abstract}
A widely applied diversification paradigm is the \emph{naive diversification} choice heuristic. It stipulates that an economic agent allocates equal decision weights to given choice alternatives independent of their individual characteristics. This article provides mathematically and economically sound choice theoretic foundations for the naive approach to diversification. We axiomatize naive diversification by defining it as a preference for equality over inequality and derive its relationship to the classical diversification paradigm. In particular, we show that (i) the notion of permutation invariance lies at the core of naive diversification and that an economic agent is a naive diversifier if and only if his preferences are convex and permutation invariant; (ii) Schur-concave utility functions capture the idea of being inequality averse on top of being risk averse; and (iii) the transformations, which rebalance unequal decision weights to equality, are characterized in terms of their implied turnover.

\mbox{}\\{\textbf{ Keywords}: naive diversification, convex preferences, permutation invariant preferences, Schur-concave utility, inequality aversion, majorization, Dalton transfer, Lorenz order. }\\
{\bf JEL Classification:} C02, D81, G11.
\end{abstract}
\renewcommand{\baselinestretch}{1}

\newtheorem{defn}{Definition}
\newtheorem{lemma}{Lemma}
\newtheorem{ass}{Assumption}
\newtheorem{prop}{Proposition}
\newtheorem{cor}{Corollary}
\newtheorem{thm}{Theorem}
\newtheorem{rem}{Remark}
\newtheorem{axiom}{Axiom}
\newtheorem{example}{Example}



\section{Introduction}
\label{section:intro}

Diversification is one of the cornerstones of decision making in economics and finance. In its essence, it conveys the idea of choosing variety over similarity. Informally, one might say that the goal behind introducing variety through diversification is the reduction of risk or uncertainty, and so one might identify a diversifying decision maker with a risk averse one. This is indeed the case in the expected utility theory (EUT) of \citeasnoun{VNM1944}, where risk aversion and preference for diversification are exactly captured by the concavity of the utility function which the decision maker is maximizing. However, this equivalence fails to hold in more general models of choice. We refer to \citeasnoun{DeGiorgiMahmoud2016} for a survey on the notion of diversification in the theory of choice.

In the context of portfolio construction, standard economic theory postulates that an investor  should optimize amongst various choice alternatives by maximizing portfolio return while minimizing portfolio risk, given by the return variance \cite{Markowitz1952}. In practice, however, these traditional optimization approaches to choice are plagued by technical difficulties.\footnote{These difficulties are  stemming from the instability of the optimization problem with respect to the available data. As is the case with any economic model, the true parameters are unknown and need to be estimated, hence resulting in uncertainty and estimation error. For a discussion of the problems arising in implementing mean-variance optimal portfolios, see  for example \citeasnoun{Hodges1978},  \citeasnoun{Best1991}, \citeasnoun{Michaud1998}, and \citeasnoun{Litterman2003}.} Experimental work in the decades after the emergence of the classical theories of \citeasnoun{VNM1944}  and \citeasnoun{Markowitz1952} has shown that economic agents in reality systematically violate the traditional diversification assumption when choosing among risky gambles. Indeed, seminal psychological and behavioral economics research by \citeasnoun{TverskyKahnemann1981} (see also \citeasnoun{Simon1955} and \citeasnoun{Simon1979})
suggests that the portfolio construction task may be too complex for decision makers
to perform. Consequently, investors adopt various types of simplified diversification
paradigms in practice.

One of the most widely applied such simple rules of choice is the so-called \emph{naive diversification} heuristic.  It stipulates that an economic agent allocates equal weights among a given choice set, independent of the individual characteristics of the underlying choice alternatives. In the context of portfolio construction, this rule is often referred to as the \emph{equal-weighted} or $1/n$ strategy.
This naive diversification paradigm goes as far back as the Talmud\footnote{The relevant Talmudic passage states that ``\emph{it is advisable for one that he should divide his money in three parts, one
of which he shall invest in real estate, one of which in business, and the third part to remain
always in his hands}."} and has been coined as \emph{Talmudic diversification} by \citeasnoun{Duchin2009}.
It is documented that even Harry Markowitz used the simple $1/n$ heuristic when he made his own retirement investments. He justifies his choice on psychological grounds: ``My intention was to minimize my future regret. So I split my contributions fifty-fifty between bonds and equities." \cite{Gigerenzer2010}

 \subsection{Related principles}

Naive diversification implies a preference of equality over inequality in the choice weights. One of the earliest, closely related hypotheses concerning decisions under subjective uncertainty is the \emph{principle of insufficient reason}, also called the \emph{principle of indifference}. It is generally attributed to \citeasnoun{Bernoulli} and invoked by \citeasnoun{Bayes} in his development of the binomial theorem. The principle states that in situations where there is no logical or empirical reason to favor any one of a set of mutually exclusive events or choices over any other, one should assign them all equal probability. In Bayesian probability, this is the simplest non-informative prior.

Outside the choice theoretic framework, the notion of preference of equality over inequality dominates several prominent problems in economic theory. Early in the twentieth century, economists became interested in measuring inequality of incomes or wealth. More specifically, it became desirable to determine how income or wealth distributions might be compared in order to say that one distribution was more equal than another. The first discussion of this kind was provided by \citeasnoun{lorenz1905}. He suggested a graphical manner in which to compare inequality in finite populations in terms of nested curves. If total wealth is uniformly distributed, the so-called \emph{Lorenz curve} is a straight line. With an unequal distribution, the curves will always begin and end in the same points as with an equal distribution, but they will be bent in the middle. The rule of interpretation, as he puts it, is: as the bow is bent, concentration increases. Later, \citeasnoun{dalton1920} described the closely related \emph{principle of transfers}. Under the theoretical proposition of a positive functional relationship between income and economic welfare, stating that economic welfare increases at an exponentially decreasing rate with increased income, Dalton concludes that maximum social welfare is achievable only when all incomes are equal. Following a suggestion by \citeasnoun{Pigou1912}, he proposed the condition that a transfer of income from a richer to a poorer person, so long as that transfer does not reverse the ranking of the two, will result in greater equity. Such an operation, involving the shifting of wealth from one individual to a relatively poorer individual, is known as the \emph{Pigou-Dalton transfer} and has also been labeled as a \emph{Robin Hood transfer}.
The seminal ideas of \citeasnoun{lorenz1905} and \citeasnoun{dalton1920} will be referenced frequently throughout our development of naive diversification preferences, as the mathematical framework upon which we rely coincides with theoretical formalizations of the Lorenz curve and the Dalton transfer.

\subsection{Experimental and empirical evidence of naive diversification}

Academics and practitioners have long studied the occurrence of naive diversification, along with its downside and potential benefits. Some of the first academic demonstrations of naive diversification as a choice heuristic were made by \citeasnoun{Simonson1990} in marketing in the context of consumption decisions by individuals, and by \citeasnoun{Read1995} in the context of experimental psychology. In the context of economic and financial decision making, empirical evidence suggests behavior which is consistent with naive diversification.
For instance, \citeasnoun{Benartzi2001} turned to study whether the effect manifests itself among investors making decisions in the context of defined contribution saving plans. Their experimental evidence suggests that some people spread their contributions evenly across the investment options irrespective of the particular mix of options. The authors point out that while naive diversification can produce a ``reasonable portfolio", it affects the resulting asset allocation and can be costly. In particular, people might choose a portfolio that is not on the efficient frontier, or they might pick the wrong point along the frontier. Moreover, it does not assure sensible or coherent decision making. Subsequently,
\citeasnoun{Huberman2006} find that participants tend to invest in only a small number of the
funds offered to them, and that they tend to allocate their contributions evenly across the funds that they use, with this tendency weakening with the number of funds used. More recently, \citeasnoun{Baltussen2011} find strong evidence for what they coin as \emph{irrational} behavior. Their subjects follow a conditional naive diversification heuristic as they exclude the assets with an unattractive marginal distribution and divide the available funds equally between the remaining, attractive assets. This strategy is applied even if it leads to allocations that are dominated in terms of first-order stochastic dominance -- hence the term irrational. Irrationality has been since then frequently used to describe naive diversification behavior. In \citeasnoun{Fernandes2013}, the naive diversification bias of \citeasnoun{Benartzi2001} was replicated across different samples using a within-participant manipulation of portfolio options. It was found that the more investors use intuitive judgments, the more likely they are to display the naive diversification bias.

In the context of portfolio construction, naive diversification has enjoyed a revival during the last few years because of its simplicity on one hand and the empirical evidence on the other hand suggesting superior performance compared to traditional diversification schemes. In addition to the relative outperformance, the empirical stability of the naive $1/n$ diversification rule has made it particularly attractive in practice, as --- unlike Markowitz's risk minimization strategies --- it does not rely on unknown correlation parameters that need to be estimated from data. Moreover, its outperformance has been investigated and a range of reasons have been proposed for why naive diversification may outperform other diversification paradigms. The most widely documented of these is the so-called small-cap-effect within the universe of equities. This theory stipulates that stocks with smaller market capitalization tend to ourperform larger stocks, and by construction, naive diversification gives more exposure to smaller cap stocks compared to capitalization weighting. Empirical support for the superior performance of equal weighted portfolios relative to capitalization weighting include \citeasnoun{Lessard1976}, \citeasnoun{Roll1981}, \citeasnoun{Ohlson1982}, \citeasnoun{Breen1989}, \citeasnoun{Grinblatt1989}, \citeasnoun{Korajczyk2004}, \citeasnoun{Hamza2007} and \citeasnoun{Pae2010}. Furthermore, \citeasnoun{Demiguel2007} show the strong performance relative to optimized portfolios. \citeasnoun{Duchin2009} provide a comparison of naive and Markowitz diversification and show that an equally weighted portfolio may often be substantially closer to the true mean variance optimality than an optimized portfolio. On the other hand, \citeasnoun{Tu2011} propose a combination of naive and sophisticated strategies, including Markowitz optimization, as a way to improve performance, and conclude that the combined rules not only have a significant impact in improving the sophisticated strategies, but also outperform the naive rule in most scenarios.

 \subsection{Towards choice-theoretic foundations}

The word \emph{naive} inherently implies a lack of sophistication. Indeed, our overview of naive diversification so far indicates that it is widely viewed as an anomaly linked to irrational behavior and that it does not assure sensible or coherent decision making. In its essence, the naive diversification paradigm is considered a simple and practical rule of thumb with no economic foundation guaranteeing its optimality. Moreover, despite the large experimental and empirical evidence of the presence and outperformance of naive diversification, a formalization of this heuristic within a choice theoretic or economic modelling framework does not seem to exist.

With the purpose of filling this gap, this paper provides mathematically and economically sound choice theoretic foundations for the naive approach to diversification of decision makers and investors. To this end, we axiomatize naive diversification by framing it as a choice theoretic preference for equality over inequality, which has a utility representation, and derive its relationship to the classical diversification paradigm. The crux of our choice theoretic aciomatization of the naive diversification heuristic lies in the idea that equality is preferred over inequality, a concept that is simultaneously simple and complex, as put by \citeasnoun{Sen1973}: ``\emph{At one level, it is the simplest of all ideas and has moved people with an immediate appeal hardly matched by any other concept. At another level, however, it is an exceedingly complex notion which makes statements of inequality highly problematic, and it has been, therefore, the subject of much research by philosophers, statisticians, political theorists, sociologists and economists.}" We complement this line of research from a decision theoretic perspective by using the mathematical concept of \emph{majorization}\footnote{Historically, majorization has been used to describe inequality orderings in the economic context of inequality of income, as developed by both \citeasnoun{lorenz1905} and \citeasnoun{dalton1920}. We refer the reader to \citeasnoun{MarshallOlkin2011} for a comprehensive self-contained account of the theory and applications of majorization.} to describe a preference relation which exhibits \emph{preference for naive diversification}.

The goal of our choice-theoretic approach is threefold: First, by developing an axiomatic system for what is considered to be an ``irrational" choice paradigm such as naive diversification, we can justify that it is in fact to some degree ``rational", in the sense that the axiomatization precisely captures widely observed regularities of behavior.
Second, this axiomatization enables us to gain some insights into the nature of the preferences and the utility of the naive diversifier that were previously unknown. In particular, by relating it to other known axiomatized behavioral paradigms, we will show that preferences for naive diversification are equivalent to convex preferences that additionally exhibit an indifference among the choice alternatives, which is formalized via a notion of permutation invariant preferences. This essentially implies that naive diversifiers simply have a different, yet consistent, set of preferences and utility functions that are closely related to, rather than contradicting, those of the traditional concave utility maximizers.
Finally, one may use the axioms underlying naive diversification to test the behavioral drivers of this choice heuristic in reality. For example, one of our axioms, that of permutation invariance, implies that the given alternatives are considered in some way symmetric or equivalent by the naive decision maker. This is an axiom that can be directly tested in, say, an experimental setting by relating it to Laplace's principle of indifference and varying the amount of information available  for each of the choice alternatives.
We will briefly revisit this last point in Section \ref{section:conclusion}.

The remainder of the paper is structured as follows. Section \ref{section:theory} sets up the choice theoretic framework and provides the necessary background on majorization and doubly stochastic matrices, both of which are fundamental concepts in our development. Section \ref{section:naive} presents an axiomatic formalization of naive diversification preferences and derive its relationship to the traditional (convex) diversification axiom. We then show that the notion of permutation invariance lies at the core of our definition and that a preference relation exhibits preference for naive diversification if and only if it is convex and permutation invariant. The corresponding utility representation in terms of Schur-concave functions is discussed in Section \ref{section:utility}. We show that Schur-concavity captures the idea of being inequality averse on top of being risk averse and discuss measures of inequality, which indicate how far from optimality a given choice allocation is and which allow for a quantitative comparison of two non-equal allocations in terms of their distance. Section \ref{section:rebalancing} characterizes the transformations which rebalance a choice allocation into another more equal allocation, in terms of the implied turnover and the induced transaction cost. In particular, we show that the least possible turnover is attained by applying the Pigou-Dalton transfer finitely many times. Section \ref{section:conclusion} concludes with a summary of our contributions and a brief discussion of possible generalizations of our development.


\section{Theoretical setup}\label{section:theory}

\subsection{Preference relation}

We consider a decision maker who chooses from the vector space $\X=\LL^\infty(\Omega,\mathcal{F},\mathbb{P})$ of essentially bounded real-valued random variables on a probability space $(\Omega, \mathcal{F}, \mathbb{P})$, where $\Omega$ is the set of states of nature, $\mathcal{F}$ is a $\sigma$-algebra of events, and $\mathbb{P}$ is a $\sigma$-additive probability measure on $(\Omega,\mathcal{F})$.\footnote{In this paper, we adopt the classical setup for risk assessment used in mathematical finance. However, almost all results presented in this paper also hold when alternative assumptions on $\X$ are made, e.g., $\X$ could be the space of probability distributions on a set of prizes, as often assumed in classical decision theory models.}

The decision maker is assumed to be able to form compound choices represented by the state-wise convex combination $\alpha\,x+(1-\alpha)\,y$ for $x,y\in\X$ and $\alpha\in[0,1]$, defined by $\alpha\,x(\omega) + (1-\alpha)\,y(\omega)$  for $\mathbb{P}$-almost all $\omega\in\Omega$. The space $\X$ is endowed with the order $x\ge y\Leftrightarrow x(\omega)\ge y(\omega)$ for $\mathbb{P}$-almost all $\omega\in\Omega$. For $n\geq 2$, $\abold = (\alpha_1,\dots,\alpha_n)\in\R^n$ and $(x_1, \dots, x_n)^\prime \in \X^n$, we will often denote the convex combination $\sum_{i=1}^n \alpha_i x_i$ by the dot product $\abold \cdot \xbold$.\footnote{For notational convenience, we write elements in $\R^n$ as row vectors and elements in $\mathcal{X}^n$ as column (random) vectors.} \\

A weak preference relation on $\X$ is a binary relation $\succsim$ satisfying:
\begin{enumerate}
\item[(i)] \emph{Completeness}: For all $x,y \in \X$, $x \succsim y \lor y \succsim x$.
\item[(ii)] \emph{Transitivity}: For all $x,y,z \in \X$, $x\succsim y \land y \succsim z \Rightarrow x \succsim z$.
\end{enumerate}

Every weak preference relation $\succsim$ on $\X$ induces an indifference relation $\sim$ on $\X$ defined by $x \sim y \Leftrightarrow (x \succsim y)\land (y \succsim x)$. The corresponding strict preference relation $\succ$ on $\X$ is defined by $x \succ y \Leftrightarrow x\succsim y \land \neg(x \sim y)$. A numerical or utility representation of the preference relation $\succsim$ is a real-valued function $u \colon \X \to \R$ for which $x \succsim y$ if and only if $ u(x) \geq u(y)$.

For $x\in\X$, $F_x$ denotes the cumulative distribution function of $x$, defined by $F_x(c) = \prob{x\leq c}$ for $c\in\R$, and $e(x)$ is the expected value of $x$, that is, $e(x) = \int x \ dF_x(x)$. For $c\in \mathbb{R}$, $\delta_c$ denotes the degenerated random variable with $\delta_c(\omega)=c$ for $\mathbb{P}$-almost all $\omega\in\Omega$. The \emph{certainty equivalent} of $x\in\X$ is the value $c(x)\in\R$ such that $x\sim\delta_{c(x)}$, i.e., $c(x)$ is the certain value which the decision maker views as equally desirable as a choice $x$ with uncertain outcome. The \emph{risk premium} $\pi(x)$ of $x\in\X$ is the amount by which the expected value of a choice $x\in\X$ must exceed the value of the guaranteed outcome in order to make the uncertain and certain choices equally attractive. Formally, it is defined as $\pi(x) = e(x)-c(x)$.

Emulating the majority of frameworks of economic theory, it seems reasonable to assume that decision makers prefer more to less. In particular, in view of the monetary interpretation of the space $\X$, a natural assumption on the preference relation $\succsim$ is \emph{monotonicity}.
\begin{enumerate}
\item[(iii)] \emph{Monotonicity}: For all $ x,y\in\X,$ $x\geq y \implies  x \succsim y$.
\end{enumerate}
Monotonicity of preferences is equivalent to having an increasing utility function $u$. Indeed, for $x\geq y$, we have $x\succsim y$ and thus $u(x)\geq u(y)$. Monotonicity of the utility function simply implies that an agent believes that ``more is better"; a larger outcome never yields lower utility, and for risky bets the agent would prefer a bet which is first-order stochastically dominant over an alternative bet.

Finally, continuity of preferences is assumed for technical reasons, as it can be used as a sufficient condition for showing that preferences on infinite sets can have utility representations. It intuitively states that if $x\succ y$, then small deviations from $x$ or from $y$ will not reverse the ordering.
\begin{enumerate}
\item[(iv)] \emph{Continuity}: For every $x\succ y$, there exist neighborhoods $B_x,B_y\subseteq \X$ around $x$ and $y$, respectively, such that for every $x'\in B_x$ and $y'\in B_y$, $x'\succ y'$.
\end{enumerate}
Throughout this article, unless otherwise stated, we assume that preferences are both monotonic and continuous. Debreu's theorem \cite{Debreu1964} states that there exists a continuous monotonic utility representation $u$ of a monotonic and continuous preference relation $\succsim$.

\subsection{Choice weights and majorization}

We use the theory of majorization from linear algebra to measure the variability of weights when diversifying across a set of $n$ possible choices. Majorization, which was formally introduced by \citeasnoun{hardy1934}, captures the idea that the components of a weight vector $\abold\in \R^n$ are less spread out or more nearly equal than the components of a vector $\bbold \in \R^n$. For any $\abold = (\alpha_1,\dots,\alpha_n)\in\R^n$, let
$$\alpha_{(1)} \geq \cdots \geq \alpha_{(n)}$$
denote the components of $\abold$ in \emph{decreasing} order, and let
$$ \abold_\downarrow = (\alpha_{(1)}, \dots, \alpha_{(n)}) $$
denote the decreasing rearrangement of $\abold$. The weight vector with $i$-th component equal to 1 and all other components equal to 0 is denoted by $\ebold_i$, and the vector with all components equal to 1 is denoted by $\ebold$.
We restrict our attention to non-negative weights which sum to one, that is, $\abold\in\s^n = \left\{ \vbold=(v_1,\dots,v_n)\in\R_+^n \mid \sum_{i=1}^n v_i = 1 \right\}$. This means that the  decision maker is assumed to use his full capital and is not taking ``inverse" positions such as shorting in financial economics.
Moreover, we will sometimes refers to the set
$$ \s_\downarrow^n = \left\{ \vbold_\downarrow=(v_{(1)},\dots,v_{(n)})\in\R_+^n \mid \sum_{i=1}^n v_{(i)} = 1 \right\} . $$

We now define the notion of majorization:

\begin{defn}[Majorization]
For $\abold= (\alpha_1,\dots,\alpha_n)\in\R^n$ and $\bbold=(\beta_1,\dots,\beta_n)\in\R^n$, $\bbold$ is said to \emph{(weakly) majorize} $\abold$ (or, equivalently, $\abold$ is \emph{majorized by} $\bbold$), denoted by $\bbold \ge_m \abold$, if
$$ \sum_{i=1}^n \alpha_i = \sum_{i=1}^n \beta_i $$
and for all $k=1,\dots,n-1$,
$$ \sum_{i=1}^k \alpha_{(i)} \leq \sum_{i=1}^k \beta_{(i)} \ . $$
\end{defn}

Majorization is a preorder on the weight vectors in $\s^n$ and a partial order on $\s_\downarrow^n$.
It is trivial but important to note that all vectors in $\s^n_\downarrow$ majorize the uniform vector $\ubold_n = (\frac{1}{n},\dots,\frac{1}{n})$, since the uniform vector is the vector with minimal differences between its components.

A key mathematical result in the study of majorization and inequality measurement is a theorem due to \citeasnoun{Hardy1929}. It roughly states that a vector $\abold$ is majorized by a vector $\bbold$ if and only if $\abold$ is an averaging of $\bbold$. This ``averaging" operation is formalized via doubly stochastic matrices.\footnote{A note on terminology: the term ``stochastic matrix" goes back to the large role that they play in the theory of discrete Markov chains. Doubly stochastic matrices are also sometimes called ``Schur transformations" or ``bistochastic".}
A square matrix $P$ is said to be stochastic if its elements are all non-negative and all rows sum to one. If, in addition to being stochastic, all columns sum to one, the matrix is said to be doubly stochastic. A formal definition follows.

\begin{defn}[Doubly stochastic matrix]
An $n\times n$ matrix $P=(p_{ij})$ is \emph{doubly stochastic} if $p_{ij}\geq0$ for $i,j=1,\dots,n$, $\ebold P = \ebold$ and $P\ebold ' = \ebold '$. We denote by $\D_n$ the set of $n\times n$ doubly stochastic matrices.
\end{defn}

\begin{thm}[\citeasnoun{Hardy1929}]
\label{theorem:hardy}
For $\abold,\bbold\in\R^n$, $\abold$ is majorized by $\bbold$ if and only if  $\abold = \bbold P$ for some doubly stochastic matrix $P$.\footnote{We refer the reader to \citeasnoun{Schmeidler1979} for several economic interpretations of Theorem \ref{theorem:hardy}, including decisions under uncertainty and welfare economics.}
 \end{thm}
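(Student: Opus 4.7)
My plan is to handle the two directions of the equivalence separately.

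\textbf{The easy direction ($\Leftarrow$).} Assume $\abold = \bbold P$ for some $P = (p_{ij}) \in \D_n$, so that $\alpha_i = \sum_j \beta_j p_{ji}$. Summing over $i$ and using that every column of $P$ sums to one immediately gives $\sum_i \alpha_i = \sum_j \beta_j$. For the partial-sum inequalities, note that permutation matrices are doubly stochastic and $\D_n$ is closed under products, so I may replace $P$ by $\Pi P$ for a suitable permutation $\Pi$ and assume $\bbold = \bbold_\downarrow$ without loss of generality. Fix $k < n$ and let $S$ be the set of indices of the $k$ largest components of $\abold$. Then
\[
\sum_{i=1}^k \alpha_{(i)} \;=\; \sum_{i \in S} \alpha_i \;=\; \sum_j \beta_j\, s_j, \qquad s_j := \sum_{i \in S} p_{ji},
\]
where $0 \leq s_j \leq 1$ and $\sum_j s_j = k$. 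Since $\bbold$ is decreasing, the linear program of maximizing $\sum_j \beta_j s_j$ over such $(s_j)$ is solved by $s_1 = \cdots = s_k = 1$, giving the upper bound $\sum_{j=1}^k \beta_{(j)}$.

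\textbf{The hard direction ($\Rightarrow$).} Assume $\bbold \geq_m \abold$. I would build $P$ explicitly as a finite product of \emph{T-transforms}, i.e., matrices of the form $T = \lambda I + (1-\lambda) Q_{ij}$ with $\lambda \in [0,1]$ and $Q_{ij}$ the transposition permutation matrix swapping coordinates $i$ and $j$. Each such $T$ is manifestly doubly stochastic, and $\D_n$ is closed under composition, so any finite product lies in $\D_n$. Sort both vectors in decreasing order. If $\abold \neq \bbold$, pick the smallest index $i$ with $\alpha_i \neq \beta_i$; the partial-sum inequality at step $i$ together with equality at step $i-1$ forces $\beta_i > \alpha_i$, and sum-preservation then supplies an index $j > i$ with $\beta_j < \alpha_j$. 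I would take the smallest such $j$ and apply a T-transform on coordinates $i, j$ with $\lambda$ chosen so that at least one coordinate of $\bbold' := \bbold\, T$ coincides with the corresponding coordinate of $\abold$. Iterating this procedure terminates in at most $n - 1$ steps, since the number of coordinates where the current vector agrees with $\abold$ strictly increases.

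\textbf{Main obstacle.} The subtle point is verifying that each T-transform step preserves the two invariants that drive the induction: $\bbold'$ must remain in decreasing order, and it must still majorize $\abold$. Preservation of decreasing order is ensured by choosing $\lambda$ so that $\beta'_i \in [\max(\beta_{i+1},\alpha_i),\,\beta_i]$ (and correspondingly for $\beta'_j$); preservation of majorization reduces to a short case split on whether the running partial-sum index falls below $i$, between $i$ and $j$, or past $j$, where only the middle range is nontrivial and uses $\beta_i > \alpha_i$ and $\beta_j < \alpha_j$ directly. An alternative, more abstract route is to invoke Birkhoff's theorem to express $\abold$ as a convex combination of permutations of $\bbold$, but the T-transform construction is more informative and dovetails with the Pigou--Dalton theme developed later in the paper.
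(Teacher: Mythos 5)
First, note that the paper does not prove this theorem at all: it is cited as a classical result of Hardy, Littlewood and P\'olya, and the only related ingredient the paper actually uses is Proposition~\ref{prop:muirhead} (the $T$-transform decomposition), also quoted without proof. So your proposal has to stand on its own. Your $\Leftarrow$ direction is complete and correct: the observation that $s_j=\sum_{i\in S}p_{ji}$ satisfies $0\le s_j\le 1$ and $\sum_j s_j=k$, followed by the greedy solution of the resulting linear program, is a clean standard argument (minor slip: $\sum_i p_{ji}=1$ is a row sum, not a column sum, but both equal one here). Your $\Rightarrow$ direction is the right strategy --- it is exactly the $T$-transform route of Proposition~\ref{prop:muirhead} --- but the specific index selection breaks.

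The gap: you take $i$ to be the \emph{smallest} index with $\beta_i>\alpha_i$ and $j$ the smallest index $>i$ with $\beta_j<\alpha_j$. With that choice the coordinates strictly between $i$ and $j$ need not already agree with $\abold$, and then you cannot simultaneously preserve the decreasing order and make progress. Concretely, take $\bbold=(\tfrac12,\tfrac12,0)$ and $\abold=(\tfrac25,\tfrac25,\tfrac15)$, so $\abold\le_m\bbold$. Your rule gives $i=1$, $j=3$; the natural transfer $\delta=\min(\beta_1-\alpha_1,\alpha_3-\beta_3)=\tfrac1{10}$ produces $(\tfrac25,\tfrac12,\tfrac1{10})$, which is no longer decreasing, while your proposed clamp $\beta_1'\ge\max(\beta_2,\alpha_1)=\tfrac12$ forces $\lambda=1$, i.e.\ the identity transform, and the iteration stalls. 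The standard fix (Marshall--Olkin, Lemma 2.B.1) is to take $j$ to be the \emph{largest} index with $\beta_j>\alpha_j$ and $k$ the smallest index $>j$ with $\beta_k<\alpha_k$; then every coordinate strictly between $j$ and $k$ already equals the corresponding coordinate of $\abold$, which is precisely what sandwiches $\beta_j-\delta$ and $\beta_k+\delta$ so that the vector stays decreasing, majorization is preserved, and one new coordinate agrees at each step. Alternatively, drop the sortedness invariant entirely and re-sort after each step with a permutation matrix (permutations are doubly stochastic and $\D_n$ is closed under products), tracking progress via the strictly decreasing quantity $\sum_l|\beta_l-\alpha_l|$. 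Either repair completes your argument; as written, the induction does not terminate.
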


An obvious example of a doubly stochastic matrix is the $n\times n$ matrix in which every entry is $1/n$, which we shall denote by $P_n$. Other simple examples are given by the $n\times n$ identity matrix $I_n$ and by permutation matrices: a square matrix $\Pi$ is said to be a permutation matrix if each row and column has a single unit entry with all other entries being zero. There are $n!$ such matrices of size $n\times n$ each of which is obtained by interchanging rows or columns of the identity matrix.  The set $\D_n$ of doubly stochastic matrices is convex and permutation matrices constitute its extreme points.

Use of a special type of doubly stochastic matrix, the so-called $T$-transform, will be made in this paper.

\begin{defn}[T-transform]
\label{defn:t_transform}
A (elementary) $T$-\emph{transform} is a matrix that has the form $T=\lambda I + (1-\lambda)\Pi$, where $\lambda\in[0,1]$ and $\Pi$ is a permutation matrix that interchanges exactly two coordinates. For $\abold=(\alpha_1,\dots,\alpha_n)\in\s^n$, $\abold T$ thus has the form
$$ \abold T = (\alpha_1, \dots, \alpha_{j-1}, \lambda\alpha_j+(1-\lambda)\alpha_k, \alpha_{j+1}, \dots, \alpha_{k-1}, \lambda\alpha_k+(1-\lambda)\alpha_j, \alpha_{k+1}, \dots, \alpha_n) \ , $$
where we assume that the $j$-th and $k$-th coordinates of $\abold$ are averaged.
\end{defn}

The importance of $T$-transforms can be seen from the following result, which is essential in the proof of Theorem \ref{theorem:hardy} and which we shall utilize in some of the proofs of this article.

\begin{prop}[\citeasnoun{Muirhead1903}; \citeasnoun{hardy1934}]
\label{prop:muirhead}
If $\abold\in\R^n$ is majorized by $\bbold\in\R^n$, then $\abold$ can be derived from $\bbold$ by successive applications of a finite number of $T$-transforms.
\end{prop}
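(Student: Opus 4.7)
The plan is to run a finite algorithm that converts $\bbold$ into $\abold$ one step at a time, where each step is a $T$-transform. The main preparatory reduction is to assume both vectors are arranged in decreasing order. This costs nothing because any permutation can be written as a product of transpositions, and each transposition is itself a $T$-transform with $\lambda=0$ (in the sense of Definition \ref{defn:t_transform}). So I would first premultiply $\bbold$ by a permutation matrix to obtain $\bbold_\downarrow$, and plan to postpend a permutation at the end to map $\abold_\downarrow$ to $\abold$. Thus it suffices to prove the statement under the assumption $\abold=\abold_\downarrow$ and $\bbold=\bbold_\downarrow$.

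Next I would induct on the number $N(\abold,\bbold)$ of indices where $\abold$ and $\bbold$ differ. The base case $N=0$ is trivial. For the inductive step, I would choose the following pair of indices: let $j$ be the largest index with $b_j>a_j$, and let $k$ be the smallest index strictly greater than $j$ with $b_k<a_k$. The existence of $j$ follows from $\abold\neq\bbold$ together with $\sum a_i=\sum b_i$. The existence of $k$ requires a short argument using majorization: if no such $k$ existed, then $b_\ell=a_\ell$ for all $\ell>j$ (since $j$ is maximal), which would force $\sum_{\ell\le j}b_\ell=\sum_{\ell\le j}a_\ell$ and contradict the partial-sum inequality at position $j-1$, given $b_j>a_j$. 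By the extremal choices of $j$ and $k$, one has $b_\ell=a_\ell$ for every $\ell$ strictly between them.

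I would then apply the $T$-transform that averages coordinates $j$ and $k$ with weight $\lambda$ chosen so that $\delta:=(1-\lambda)(b_j-b_k)=\min(b_j-a_j,\,a_k-b_k)$, producing $\bbold'$ with $b_j'=b_j-\delta$, $b_k'=b_k+\delta$, and all other entries unchanged. Three routine verifications are needed. First, $\bbold'$ remains in decreasing order: $b_j'\ge a_j\ge a_{j+1}=b_{j+1}$ on the left, $b_k'\le a_k\le a_{k-1}=b_{k-1}$ on the right, and $b_k'\ge b_k\ge b_{k+1}$ below. Second, $\bbold'\ge_m\abold$ still holds: nothing changes for partial sums indexed $\ell<j$ or $\ell\ge k$, while for $j\le \ell<k$ the partial sum drops by $\delta$ but remains $\ge \sum_{i\le\ell}a_i$ because at $\ell=j$ one gets $b_j'\ge a_j$ by the definition of $\delta$, and the intermediate coordinates are equal to those of $\abold$. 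Third, $N(\abold,\bbold')\le N(\abold,\bbold)-1$ because the choice of $\delta$ forces equality at position $j$ or position $k$.

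The induction then terminates after finitely many $T$-transforms with $\bbold'=\abold$, completing the proof. The main subtlety, and the place I would be most careful, is the choice and existence of the index pair $(j,k)$ together with the claim that $b_\ell=a_\ell$ for $j<\ell<k$; everything after that is a short bookkeeping computation that the specific $T$-transform both preserves decreasing order and preserves majorization while strictly reducing the number of disagreements.
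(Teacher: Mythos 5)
The paper states this proposition as a cited classical result (Muirhead; Hardy--Littlewood--P\'olya) and gives no proof of its own, so there is nothing internal to compare against; your argument is the standard induction on the number of disagreeing coordinates found in Hardy--Littlewood--P\'olya and Marshall--Olkin, and it is correct, including the existence argument for the pair $(j,k)$ and the verification that $\delta=\min(b_j-a_j,\,a_k-b_k)\le b_j-b_k$ so that a valid $\lambda\in[0,1]$ exists. The only spot to tidy is the monotonicity check when $k=j+1$: there $a_{j+1}=b_{j+1}$ fails, but the needed inequality $b_j'\ge b_k'$ still follows from the chain $b_j'\ge a_j\ge a_k\ge b_k'$, so the proof goes through as written once that case is stated explicitly.
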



\section{Naive diversification preferences}
\label{section:naive}

\subsection{Classical diversification}

An economic agent who chooses to diversify is traditionally understood to prefer variety over similarity. Axiomatically, preference for diversification is formalized as follows; see \citeasnoun{Dekel1989}.

\begin{defn}[Preference for diversification]
\label{defn:div}
A preference relation $\succsim$ exhibits \emph{preference for diversification} if for any $x_1, \dots, x_n \in \X$ and $\alpha_1, \dots, \alpha_n \in [0,1]$ for which $\sum_{i=1}^n \alpha_i = 1$,
$$ x_1 \sim \dots \sim x_n \Rightarrow \sum_{i=1}^n \alpha_i x_i \succsim x_j \quad \textrm{for all } j =1,\dots, n.$$
\end{defn}

This definition states that an individual will want to diversify among a collection of choices all of which are ranked equivalently. The most common example of such diversification is within the universe of asset markets, where an investor faces a choice amongst risky assets.

The related notion of convexity of preferences inherently relates to the classic ideal of diversification, as introduced by \citeasnoun{Bernoulli}.
By combining two choices, the decision maker is ensured under convexity that he is never ``worse off" than the least preferred of these two choices.

\begin{defn}[Convex preferences]
A preference relation $\succsim$ on $\X$ is \emph{convex} if for all $x,y \in \X$ and $\alpha\in(0,1)$,
$$ x  \succsim y \Rightarrow \alpha \,x + (1-\alpha)\, y \succsim y . $$
\end{defn}

Indeed, a monotonic and continuous preference relation is convex if and only if it exhibits preference for diversification, allowing us to use both axioms interchangeably. Moreover, it is well-known that a preference relation that is represented by a concave utility function is convex, and that a preference relation is convex if and only if its utility representation is quasi-concave.
Variations on this classical definition of diversification exist in the literature (see, for example, \citeasnoun{Chateauneuf2002} and
\citeasnoun{Chateauneuf2007}). We refer to \citeasnoun{DeGiorgiMahmoud2016} for a recent analysis of the classical definitions of diversification in the theory of choice.

\subsection{Naive diversification}

We present an axiomatic formalization of the notion of naive diversification in terms of preference of equal decision weights over unequal decision weights.

\begin{defn}[Preference for naive diversification]
\label{defn:naive_div}
A preference relation $\succsim$ exhibits \emph{preference for naive diversification} if for $n\in\N$, and $\abold=(\alpha_1,\dots,\alpha_n)\in\s^n$, $\bbold = (\beta_1, \dots, \beta_n) \in \s^n$ it follows that:
$$\abold \le_m \bbold \quad \Rightarrow \quad \sum_{i=1}^n \alpha_i x_i \succsim \sum_{i=1}^n \beta_i x_i \mbox{ for all $x_1,\dots,x_n\in\X$}.$$

A preference relation $\succsim$ exhibits \emph{preference for weak naive diversification} if for $n\in\N$ and $\abold=(\alpha_1,\dots,\alpha_n)\in\s^n$ it follows that:
$$ \frac{1}{n} \sum_{i=1}^n  x_i \succsim \sum_{i=1}^n \alpha_i x_i\mbox{ for all $x_1,\dots,x_n\in\X$}.$$
\end{defn}

This definition states that a preference relation $\succsim$ exhibits preference for naive diversification when a weight vector is always preferred to any alternative weight vector that majorizes it, i.e., it is more unequal; see \citeasnoun{Ibragimov2009}.
We now derive some initial properties of a preference relation $\succsim$ that exhibits preference for naive diversification:

\paragraph{(1) On naive versus weak naive diversification.} Definition~\ref{defn:naive_div} implies that $\frac{1}{n} \sum_{i=1}^n x_i \succsim \sum_{i=1}^n \alpha_i x_i $ for any $\abold \in \s^n$, because
any $\abold\in\s^n$ majorizes the equal-weighted decision vector $\ubold_n=\left(\frac{1}{n},\dots,\frac{1}{n}\right)$. It follows that the equal-weighted decision vector $\ubold_n$ is the most preferred choice allocation when $\succsim$ exhibits naive diversification preferences. This means that preference for naive diversification implies preference for weak naive diversification. However, the converse does not necessarily hold.

\paragraph{(2) On naive diversification and number of alternatives.} In general, we have
$$ \ubold_n\cdot\xbold \ \succsim\  \left(\frac{1}{n-1}, \dots, \frac{1}{n-1},0\right)\cdot\xbold \ \succsim \  \cdots \succsim \ \left(\frac{1}{2}, \frac{1}{2}, 0, \dots, 0\right)\cdot\xbold  \ \succsim \ \ebold_1 \cdot \xbold \ ,$$
for all $n\in\N$.
This ordering entails the informal diversification paradigm that \emph{more is better}, as analyzed by \citeasnoun{EltonGruber1977}, since an equal weighted allocation to $n$ choices is more preferred to an equal weighted allocation to $m$ choices if and only if $n\geq m$.

\paragraph{(3) On indifference under naive diversification.} Note that choice weights under naive diversification preferences are equivalent whenever their ordered vectors coincide. Moreover, whenever a collection of choices are pairwise equally ranked, a convex combination of each of these must be equally ranked. The following  formalization of these observations is hence an immediate consequence of Definition \ref{defn:naive_div}.

\begin{lemma}
\label{lemma:lemma1}
Let $\abold=(\alpha_1,\dots,\alpha_n)\in\s^n$, $\bbold=(\beta_1,\dots,\beta_n)\in\s^n$ and  $x_1, \dots, x_n, y_1, \dots, y_n \in \X$ such that $x_i \sim y_i$ for $i=1,\dots,n$. Suppose that $\succsim$ exhibits preference for naive diversification. Then
\begin{enumerate}
\item[(i)] $\sum_{i=1}^n \alpha_i x_i \sim \sum_{i=1}^n \beta_i x_i$ if and only if $\sum_{i=1}^k \alpha_{(i)}= \sum_{i=1}^k \beta_{(i)}$ for all $k=1,\dots,n$;
\item[(ii)] $ \sum_{i=1}^n \alpha_i x_i \sim \sum_{i=1}^n \alpha_iy_i $.
\end{enumerate}
\end{lemma}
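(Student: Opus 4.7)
The plan is to establish (i) directly from Definition~\ref{defn:naive_div} and then deduce (ii) by lifting the problem to an extended $2n$-tuple.

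For (i), I would first note that the condition $\sum_{i=1}^k \alpha_{(i)} = \sum_{i=1}^k \beta_{(i)}$ for every $k = 1, \dots, n$ is equivalent to $\abold_\downarrow = \bbold_\downarrow$, i.e., $\abold$ and $\bbold$ are permutations of one another. In that situation both $\abold \le_m \bbold$ and $\bbold \le_m \abold$ hold by the definition of weak majorization, so applying Definition~\ref{defn:naive_div} in each direction immediately yields $\sum \alpha_i x_i \succsim \sum \beta_i x_i$ and its reverse, hence indifference; this is the ``if'' direction. For the ``only if'' direction, I would argue contrapositively: assuming $\abold_\downarrow \ne \bbold_\downarrow$, I would exhibit degenerate test alternatives $x_i = \delta_{c_i}$ with constants $c_i$ chosen to separate partial sums (e.g., $c_i = 1$ on those coordinates where the decreasing rearrangements differ and $c_i = 0$ elsewhere) so that monotonicity combined with the inequality furnished by naive diversification produces a strict preference, contradicting the hypothesized indifference.

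For (ii), the idea is to embed the problem into a higher-dimensional simplex. I would set $\zbold = (x_1, \dots, x_n, y_1, \dots, y_n)^\prime \in \X^{2n}$ and define
$$ \gamma = (\alpha_1, \dots, \alpha_n, 0, \dots, 0), \qquad \gamma' = (0, \dots, 0, \alpha_1, \dots, \alpha_n) \in \s^{2n}. $$
Both $\gamma$ and $\gamma'$ share the decreasing rearrangement $(\alpha_{(1)}, \dots, \alpha_{(n)}, 0, \dots, 0)$, so part (i) applied in $\s^{2n}$ to the tuple $\zbold$ yields $\gamma \cdot \zbold \sim \gamma' \cdot \zbold$. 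Unpacking each side then gives exactly $\sum_{i=1}^n \alpha_i x_i \sim \sum_{i=1}^n \alpha_i y_i$.

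The main delicate point will be the ``only if'' direction of (i): Definition~\ref{defn:naive_div} supplies only a one-sided implication from majorization to preference, so recovering weight-equality from an indifference observation requires a careful selection of test alternatives and is really meant in the universal sense (the biconditional should be read over all $x_1, \dots, x_n \in \X$). It is also worth flagging that the argument for (ii) sketched above does not actually invoke the hypothesis $x_i \sim y_i$; the assumption appears to play an interpretive role rather than a strictly logical one in the statement.
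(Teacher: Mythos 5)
The paper never actually writes a proof of this lemma --- it declares it ``an immediate consequence of Definition~\ref{defn:naive_div}'' and supplies only the prose hint that weight vectors with coinciding ordered partial sums majorize each other. Your ``if'' direction of (i) is exactly that argument, carried out correctly: $\sum_{i=1}^k \alpha_{(i)}= \sum_{i=1}^k \beta_{(i)}$ for all $k$ means $\abold_\downarrow=\bbold_\downarrow$, hence $\abold\le_m\bbold$ and $\bbold\le_m\abold$, and two applications of Definition~\ref{defn:naive_div} give indifference. So far you match the paper.

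The ``only if'' direction of (i) is where your proposal has a genuine gap, and it is one the paper silently shares. Definition~\ref{defn:naive_div} is a one-way implication (majorization $\Rightarrow$ preference), and the stated axioms --- weak monotonicity, continuity, naive diversification --- are all satisfied by the total-indifference relation $x\sim y$ for all $x,y\in\X$, under which the left-hand side of (i) holds for every pair $\abold,\bbold$ while the partial-sum equalities of course need not. Your plan to separate the weights with degenerate alternatives $x_i=\delta_{c_i}$ requires turning a numerical inequality $\sum_i\alpha_i c_i>\sum_i\beta_i c_i$ into a \emph{strict} preference, which weak monotonicity cannot deliver; and with the $x_i$ fixed as in the lemma's hypothesis the biconditional is false outright (take $x_1=\cdots=x_n$). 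So the ``only if'' claim cannot be established from the axioms as given; it needs either a non-degeneracy/strict-monotonicity assumption or a strengthening of Definition~\ref{defn:naive_div} to a biconditional quantified over all $x_1,\dots,x_n$. Your embedding argument for (ii) is formally a valid deduction from the definition, and your observation that it never uses $x_i\sim y_i$ is more damning than you let on: the same trick with $\gamma=(1,0)$, $\gamma'=(0,1)$ and $\zbold=(x,y)$ shows that Definition~\ref{defn:naive_div} literally forces $x\sim y$ for \emph{all} $x,y\in\X$ (since $\ebold_1$ and $\ebold_2$ majorize each other), i.e.\ the preference relation is trivial. So your proof of (ii) ``works'' only because the axiom system is degenerate; the intended content of (ii) --- substituting indifferent alternatives inside a convex combination --- is really an independence property that does not follow from naive diversification in any non-degenerate reading and is not proved in the paper either.
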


\paragraph{(4) On naive diversification and convex preferences.} An agent whose preferences are convex chooses to diversify by taking a convex combination over individual choices without specifying a preference ordering over choice weights. So the classical notion of diversification does not necessarily imply preferences for naive diversification. The converse holds however: suppose that $\succsim$ exhibits preferences for naive diversification and let $x_1,\dots,x_n\in\X$ with $x_1\sim \dots\sim x_n$. Then, for $\abold=(\alpha_1,\dots,\alpha_n)\in\s^n$, we have $\sum_{i=1}^n \alpha_i x_i\succsim x_j$ for all $j=1,\dots,n$, since the components of the choice vector $\abold$ are more nearly equal than those of $\ebold_j$, i.e., any $\abold\in \s^n$ is majorized by $\ebold_j$. This proves the following result.
\begin{prop}
\label{prop:naive_convex}
Naive diversification preferences are convex, or, equivalently, exhibit preferences for diversification.
\end{prop}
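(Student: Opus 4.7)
The plan is to reduce preference for naive diversification to the classical definition of preference for diversification (Definition \ref{defn:div}) and then invoke the equivalence with convexity stated in the excerpt. The core observation is that the standard basis vector $\ebold_j\in\s^n$ is the extreme unequal element of $\s^n$, so every $\abold\in\s^n$ is majorized by each $\ebold_j$.

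First, I would fix arbitrary $x_1,\dots,x_n\in\X$ satisfying $x_1\sim\cdots\sim x_n$, together with an arbitrary $\abold=(\alpha_1,\dots,\alpha_n)\in\s^n$. Next, for each $j\in\{1,\dots,n\}$, I would verify the majorization $\abold\le_m \ebold_j$: the components of $\abold$ sum to $1=\sum_i (\ebold_j)_i$, and the decreasing rearrangement of $\ebold_j$ is $(1,0,\dots,0)$, so $\sum_{i=1}^k \alpha_{(i)}\le 1 = \sum_{i=1}^k (\ebold_j)_{(i)}$ for every $k=1,\dots,n-1$. Applying Definition \ref{defn:naive_div} then yields $\sum_{i=1}^n \alpha_i x_i\succsim \sum_{i=1}^n (\ebold_j)_i x_i = x_j$, and since $j$ was arbitrary, the preference for diversification axiom of Definition \ref{defn:div} is verified.

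Finally, I would invoke the equivalence stated just above Definition \ref{defn:naive_div}, namely that a monotonic and continuous preference relation is convex if and only if it exhibits preference for diversification (which holds under the standing assumptions on $\succsim$ in Section \ref{section:theory}). This gives convexity of $\succsim$ and completes the proof.

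There is no real obstacle here: the only content is the simple observation that every element of $\s^n$ is majorized by each standard basis vector $\ebold_j$, after which the conclusion is a direct application of the definitions and the previously noted equivalence between classical diversification and convexity. The argument is essentially the one sketched in the paragraph preceding the proposition, and the proof only needs to make this chain of implications explicit.
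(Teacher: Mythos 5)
Your proof is correct and follows essentially the same route as the paper: observe that every $\abold\in\s^n$ is majorized by each standard basis vector $\ebold_j$, apply Definition \ref{defn:naive_div} to get $\sum_i \alpha_i x_i\succsim x_j$, and conclude convexity via the stated equivalence with preference for diversification under the standing monotonicity and continuity assumptions. Your write-up is simply a more explicit version of the paper's one-paragraph argument.
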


\subsection{Permutation invariant preferences}

The notion of permutation invariance lies at the core of the definition of naive diversification. Permutation invariance captures the idea that the underlying characteristics of the individual choices are irrelevant in the decision making process. In other words, the economic agent is indifferent towards a permutation of the components of choice vectors. We formalize such permutation invariant preferences through permutation matrices. For a permutation matrix $\Pi$ and choice vector $\abold=(\alpha_1,\dots,\alpha_n)\in\s^n$, we shall write $\abold\Pi$ for the vector whose components have been shuffled using $\Pi$ and whose $i$-th component we denote by $(\abold\Pi)_i$. When ordering the components of $\abold\Pi$ in decreasing order, we denote its $i$-th ordered component by $(\abold\Pi)_{(i)}$.

\begin{defn}[Permutation invariant preferences]
A preference relation $\succsim$ on $\X$ is \emph{permutation invariant} if for all $x_1, \dots, x_n\in\X$ and $\abold = (\alpha_1, \dots, \alpha_n) \in \s^n$,
$$ \abold \cdot \xbold \sim (\abold\Pi)\cdot \xbold \ , $$
where $\Pi$ is a permutation matrix and $\xbold = (x_1,\dots,x_n)$.
\end{defn}

The following lemma shows that naive diversification preferences are permutation invariant.

\begin{lemma}
\label{lemma:naive_sym1}
Naive diversification preferences are permutation invariant.
\end{lemma}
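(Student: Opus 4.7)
The plan is to exploit the fact that the majorization order defined in the paper depends only on the decreasing rearrangement of its arguments, and is therefore blind to coordinate permutations. Since naive diversification preferences are defined via majorization, permutation invariance should fall out almost immediately by applying the definition in both directions.

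More precisely, fix $n\in\N$, a permutation matrix $\Pi$, a weight vector $\abold=(\alpha_1,\dots,\alpha_n)\in\s^n$ and random variables $x_1,\dots,x_n\in\X$. The first key observation is that $\abold\Pi$ is obtained from $\abold$ merely by shuffling coordinates, so the two vectors have identical multisets of entries and hence identical decreasing rearrangements: $(\abold\Pi)_\downarrow = \abold_\downarrow$. Consequently,
$$ \sum_{i=1}^k \alpha_{(i)} \;=\; \sum_{i=1}^k (\abold\Pi)_{(i)} \qquad \text{for every } k=1,\dots,n, $$
and of course the total sums both equal $1$ because $\abold\Pi\in\s^n$ as well. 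By the definition of majorization, this yields simultaneously $\abold \le_m \abold\Pi$ and $\abold\Pi \le_m \abold$.

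Next, I would invoke the hypothesis that $\succsim$ exhibits preference for naive diversification and apply Definition~\ref{defn:naive_div} to each of these two majorization relations, with the \emph{same} choices $x_1,\dots,x_n$. This gives
$$ \abold\cdot\xbold \;\succsim\; (\abold\Pi)\cdot\xbold \qquad \text{and} \qquad (\abold\Pi)\cdot\xbold \;\succsim\; \abold\cdot\xbold, $$
so by the definition of the induced indifference relation $\sim$ one immediately obtains $\abold\cdot\xbold \sim (\abold\Pi)\cdot\xbold$, which is the desired permutation invariance. (Equivalently, one may appeal directly to Lemma~\ref{lemma:lemma1}(i) with $y_i=x_i$, since equal partial sums of the ordered weights is precisely the hypothesis there.)

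There is no real obstacle here: the entire argument is driven by the symmetry of the majorization preorder under coordinate permutations, which is built into its definition via the decreasing rearrangement. The only thing to be slightly careful about is to apply the naive diversification property in both directions on the same collection of choices $x_1,\dots,x_n$, rather than reindexing the $x_i$'s, since otherwise one would be comparing different compound alternatives.
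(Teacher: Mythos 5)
Your proof is correct and follows essentially the same route as the paper's: both rest on the observation that $\abold_\downarrow = (\abold\Pi)_\downarrow$, hence equal partial sums of the ordered weights, and then conclude indifference (the paper via Lemma~\ref{lemma:lemma1}, you via applying Definition~\ref{defn:naive_div} in both directions, which you correctly note is equivalent). No gap.
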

\begin{proof}
For all $\abold = (\alpha_1,\dots,\alpha_n) \in\s^n$, we have $\abold_\downarrow = (\abold\Pi)_\downarrow$. Therefore, $\sum_{i=1}^k \alpha_{(i)} = \sum_{i=1}^k (\abold\Pi )_{(i)}$ for all $k=1,\dots,n$. By Lemma \ref{lemma:lemma1}, this implies that $ \abold \cdot \xbold \sim (\abold\Pi)\cdot \xbold$.
\end{proof}

Recall that Lemma \ref{lemma:lemma1} characterizes indifference between choice weights in terms of equality of the corresponding ordered vectors. An immediate consequence is the following Corollary, which states that an indifference between two choice allocations does not necessarily correspond to the choice vectors being equal, but that they differ in terms of permutation.

\begin{cor}
\label{cor:naive_sym}
Let $\abold,\bbold\in\s^n$ and  $x_1, \dots, x_n\in \X$. If $\succsim$ exhibits preference for naive diversification, then
$\abold = \bbold\Pi$
if and only if
$ \sum_{i=1}^n \alpha_i x_i \sim \sum_{i=1}^n \beta_ix_i $.
\end{cor}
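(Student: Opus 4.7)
The plan is to handle the two directions of the biconditional separately, relying on Lemma~\ref{lemma:lemma1}(i) and Lemma~\ref{lemma:naive_sym1}.

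For the forward implication $\abold = \bbold\Pi \Rightarrow \sum_i \alpha_i x_i \sim \sum_i \beta_i x_i$, I would simply invoke the permutation invariance established in Lemma~\ref{lemma:naive_sym1}: writing $\sum_i \alpha_i x_i = \abold\cdot\xbold = (\bbold\Pi)\cdot\xbold$, Lemma~\ref{lemma:naive_sym1} immediately yields $(\bbold\Pi)\cdot\xbold \sim \bbold\cdot\xbold = \sum_i \beta_i x_i$. (Equivalently, if $\abold = \bbold\Pi$ then the decreasing rearrangements coincide, so the partial sums $\sum_{i=1}^k \alpha_{(i)} = \sum_{i=1}^k \beta_{(i)}$ agree for all $k$, and Lemma~\ref{lemma:lemma1}(i) gives indifference.)

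For the reverse implication, I would apply Lemma~\ref{lemma:lemma1}(i) directly: the assumed indifference $\sum_i \alpha_i x_i \sim \sum_i \beta_i x_i$ forces
$$ \sum_{i=1}^k \alpha_{(i)} = \sum_{i=1}^k \beta_{(i)} \qquad \text{for all } k=1,\dots,n. $$
Taking successive differences yields $\alpha_{(k)} = \beta_{(k)}$ for every $k$, i.e.\ $\abold_\downarrow = \bbold_\downarrow$. Thus $\abold$ and $\bbold$ carry the same multiset of components and hence differ only by a reordering, which is exactly to say that there is a permutation matrix $\Pi$ with $\abold = \bbold\Pi$.

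The only real subtlety, and hence the main ``obstacle,'' is interpretational rather than technical: one must read the indifference $\sum_i \alpha_i x_i \sim \sum_i \beta_i x_i$ in the same generic/for-all-$\xbold$ sense in which Lemma~\ref{lemma:lemma1}(i) is formulated, consistent with the quantifier structure of Definition~\ref{defn:naive_div} (and with the convention flagged in the technical remark on viewing $\succsim$ as a preference on $\s^n$). Once this convention is fixed, the corollary is an immediate two-line consequence of Lemma~\ref{lemma:lemma1}(i) together with Lemma~\ref{lemma:naive_sym1}, with no further computation required.
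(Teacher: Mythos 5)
Your proof is correct and follows essentially the same route the paper intends: the corollary is stated there as an immediate consequence of Lemma~\ref{lemma:lemma1}(i), with the forward direction also covered by the permutation invariance of Lemma~\ref{lemma:naive_sym1}, exactly as you argue. Your remark about reading the indifference in the for-all-$\xbold$ sense is the right caveat, since the ``only if'' direction of Lemma~\ref{lemma:lemma1}(i) fails for a fixed degenerate choice of $x_1,\dots,x_n$.
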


The significance of permutation invariance manifests itself in its implication for classical diversification. Indeed, imposing permutation invariance on convex preferences yields preferences for naive diversification (Proposition \ref{prop:naive_sym2}). We start by showing the weaker result.

\begin{prop}
\label{prop:naive_sym1}
A preference relation $\succsim$ that is permutation invariant and convex exhibits preference for weak naive diversification.
\end{prop}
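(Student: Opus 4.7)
The plan is to exploit permutation invariance to generate many indifferent rearrangements of $\abold\cdot\xbold$, then invoke convexity (equivalently, preference for diversification in the sense of Definition~\ref{defn:div}) to show that an equal-weighted convex combination of these rearrangements is at least as preferred as $\abold\cdot\xbold$ itself; finally, a symmetry calculation will identify this particular convex combination as $\ubold_n\cdot\xbold$.

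More concretely, fix $\abold=(\alpha_1,\dots,\alpha_n)\in\s^n$ and $x_1,\dots,x_n\in\X$, and enumerate all permutation matrices of size $n$ as $\Pi_1,\dots,\Pi_{n!}$. By permutation invariance,
$$(\abold\Pi_k)\cdot\xbold \sim \abold\cdot\xbold \quad \text{for every } k=1,\dots,n!.$$
Since $\succsim$ is convex, monotonic and continuous, it exhibits preference for diversification (this equivalence is stated in the paragraph following Definition~\ref{defn:div}). Applying this to the family of $n!$ pairwise indifferent alternatives $(\abold\Pi_k)\cdot\xbold$ with uniform weights $1/n!$ yields
$$\frac{1}{n!}\sum_{k=1}^{n!}(\abold\Pi_k)\cdot\xbold \ \succsim\ (\abold\Pi_j)\cdot\xbold \sim \abold\cdot\xbold,$$
for any fixed $j$. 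Strictly speaking, Definition~\ref{defn:div} is stated for a finite set of alternatives, so this step is a direct application with the $n!$ alternatives $\{(\abold\Pi_k)\cdot\xbold\}_k$.

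It remains to identify the left-hand side. A simple symmetry argument shows that
$$\frac{1}{n!}\sum_{k=1}^{n!}\abold\Pi_k = \ubold_n,$$
because when summing the $i$-th coordinate of $\abold\Pi_k$ over all permutations, each of $\alpha_1,\dots,\alpha_n$ appears in position $i$ exactly $(n-1)!$ times, so the $i$-th coordinate of the average equals $\frac{(n-1)!}{n!}\sum_{j=1}^n\alpha_j=\frac{1}{n}$, using $\abold\in\s^n$. Substituting back gives
$$\frac{1}{n}\sum_{i=1}^n x_i = \ubold_n\cdot\xbold = \frac{1}{n!}\sum_{k=1}^{n!}(\abold\Pi_k)\cdot\xbold \ \succsim\ \abold\cdot\xbold,$$
which is precisely preference for weak naive diversification.

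The only subtle point is the invocation of preference for diversification on $n!$ alternatives rather than on two; this is not itself an obstacle because Definition~\ref{defn:div} is already stated for arbitrary finite $n$, but one should be careful that the alternatives $(\abold\Pi_k)\cdot\xbold$ are genuinely pairwise indifferent (they are, by permutation invariance combined with transitivity of $\sim$). The rest of the argument is the combinatorial identity $\frac{1}{n!}\sum_k\abold\Pi_k=\ubold_n$, which is a one-line symmetry computation.
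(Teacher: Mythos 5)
Your proof is correct, but it takes a genuinely different route from the paper's. The paper invokes Muirhead's result (Proposition~\ref{prop:muirhead}) to write $\ubold_n=\abold T_1\cdots T_k$ as a finite chain of $T$-transforms, and then proves by induction that each single $T$-transform weakly improves the allocation, since $\abold T=\lambda\abold+(1-\lambda)\abold Q$ is a binary convex combination of two permutation-indifferent alternatives. You instead average over the entire symmetric group in one shot: all $n!$ rearrangements $(\abold\Pi_k)\cdot\xbold$ are pairwise indifferent by permutation invariance, preference for diversification applied with uniform weights $1/n!$ gives $\frac{1}{n!}\sum_k(\abold\Pi_k)\cdot\xbold\succsim\abold\cdot\xbold$, and the combinatorial identity $\frac{1}{n!}\sum_k\abold\Pi_k=\ubold_n$ (each $\alpha_j$ lands in each slot $(n-1)!$ times) identifies the left-hand side as $\ubold_n\cdot\xbold$. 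Both arguments are sound. What each buys: your argument is more elementary and self-contained --- it needs no decomposition theorem, only one application of the diversification axiom plus a one-line symmetry computation --- but it is tied to the barycenter $\ubold_n$ and cannot compare two arbitrary majorization-ordered vectors $\abold\le_m\bbold$. The paper's $T$-transform induction is heavier for the weak statement, but the identical machinery immediately yields the stronger Proposition~\ref{prop:naive_sym2}, which is why the paper sets it up this way. Two minor remarks on your write-up: (i) you pass through the equivalence ``convex $\Leftrightarrow$ preference for diversification,'' which requires the standing monotonicity and continuity assumptions, whereas the $n!$-fold comparison can also be obtained by iterating the binary convexity axiom directly (compare $\frac{1}{m}\sum_{k\le m}z_k=\frac{m-1}{m}w_{m-1}+\frac{1}{m}z_m$ and induct), so the hypotheses of the proposition as literally stated suffice; (ii) you are right to flag that pairwise indifference of the $(\abold\Pi_k)\cdot\xbold$ needs transitivity of $\sim$, which holds since $\succsim$ is a weak preference relation.
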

\begin{proof}
Because any $\abold =(\alpha_1,\dots,\alpha_n)\in\s^n$ majorizes the vector $\bold{u}_n$, then, according to Proposition~\ref{prop:muirhead}, $\bold{u}_n$ can be derived from $\abold$ by successive applications of a finite number of $T$-transforms, i.e.,
$$\bold{u}_n=\abold T_1T_2\cdots T_k$$
where $T_1,T_2,\cdots T_k$ are $T$-transforms. For $x_1,\dots,x_n\in\X$, we have:
$$\frac{1}{n}\,\sum_{i=1}^n x_i=\bold{u}_n\cdot \xbold=(\abold T_1\cdots T_k )\cdot \xbold.$$

We prove that $(\abold T_1\cdots T_k )\cdot \xbold\succsim\abold\cdot\xbold$ by mathematical induction. First of all, we show that $(\abold T)\cdot \bold{x}\succsim \abold \cdot \xbold$ when $T$ is $T$-transform and $\succsim$ is permutation invariant and convex. Indeed,
$$(\abold T)\cdot\xbold=[\abold (\lambda\,I+(1-\lambda)Q)]\cdot\xbold=\lambda\,\abold\cdot\xbold+(1-\alpha)\,(\abold Q)\cdot\xbold$$
where $Q$ is a permutation matrix. Because $\succsim$ is permutation invariant, then $(\abold Q)\cdot \xbold\sim \abold\cdot\xbold$. Finally, because $\succsim$ is convex, then
$$\lambda\,\abold\cdot\xbold+(1-\lambda)\,(\abold Q)\cdot \bold{x}\succsim \abold \cdot\xbold.$$
It follow that
$$(\abold T)\cdot \bold{x}\succsim \abold \cdot \xbold.$$
Now suppose that $(\abold T_1\cdots T_{k-1} )\cdot \xbold\succsim\abold\cdot\xbold$. Let $\tilde{\abold}=\abold T_1\cdots T_{k-1}$. It follows that:
$$(\abold T_1\cdots T_k )\cdot \xbold =(\tilde{\abold} T_k)\cdot \xbold \succsim \tilde{\abold} \cdot \xbold =(\abold T_1\cdots T_{k-1} )\cdot \xbold\succsim\abold\cdot\xbold.$$
Therefore,
$$(\abold T_1\cdots T_k )\cdot \xbold\succsim \abold\cdot\xbold.$$
This proves the statement of the proposition.\end{proof}

We recall that $T$-transforms (Definition \ref{defn:t_transform}) are averaging operators between two components of the original weight vector. This averaging operator is always weakly preferred under permutation invariant and convex preferences. The proof of Proposition \ref{prop:naive_sym1} shows that repeated averaging of two components of a weight vector reaches its limit at the equal-weighted decision vector $\ubold_n$. Therefore, Proposition \ref{prop:naive_sym1} can be viewed as a corollary to Muirhead's result (Proposition \ref{prop:muirhead}).

Another seminal result tangentially related to Proposition \ref{prop:naive_sym1} appeared in \citeasnoun{Samuelson1967}, where the first formal proof of the following, at the time seemingly well-understood, diversification paradigm is given: ``\textit{putting a fixed total of wealth equally into independently, identically distributed investments will leave the mean gain unchanged and will minimize the variance}.'' One may hence think of the conditions of having non-negative, independent and identically distributed random variables in Theorem 1 of \citeasnoun{Samuelson1967} being replaced by the permutation invariance condition in Proposition \ref{prop:naive_sym1} to yield an equal weighted allocation as optimal.\footnote{See \citeasnoun{HadarRussell1969}, \citeasnoun{HadarRussell1971}, \citeasnoun{Tesfatsion1976} and \citeasnoun{LiWong1999} for generalizations of Samuelson's classical result.}

We next derive the stronger statement, which gives naive diversification under permutation invariance and convexity.

\begin{prop}
\label{prop:naive_sym2}
A preference relation $\succsim$ that is permutation invariant and convex exhibits preference for naive diversification.
\end{prop}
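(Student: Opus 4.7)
The plan is to adapt the proof of Proposition \ref{prop:naive_sym1} to the more general majorization setting, since that earlier proof only relied on the special property that $\ubold_n$ is majorized by every $\abold\in\s^n$, and the same mechanism should work for any pair $\abold\le_m\bbold$. Concretely, since $\abold\le_m\bbold$ by hypothesis, Muirhead's Proposition \ref{prop:muirhead} provides $T$-transforms $T_1,\dots,T_k$ with $\abold=\bbold T_1T_2\cdots T_k$, so it suffices to show that passing from $\bbold$ to $\bbold T_1\cdots T_k$ via one $T$-transform at a time never makes the agent worse off.

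The heart of the argument is the one-step lemma: for any $\cbold\in\s^n$ and any $T$-transform $T=\lambda I+(1-\lambda)\Pi$, we have $(\cbold T)\cdot\xbold\succsim \cbold\cdot\xbold$. To see this, first expand
$$(\cbold T)\cdot\xbold=\lambda\,\cbold\cdot\xbold+(1-\lambda)\,(\cbold\Pi)\cdot\xbold.$$
Permutation invariance gives $(\cbold\Pi)\cdot\xbold\sim\cbold\cdot\xbold$, and convexity of $\succsim$ then ensures that any convex combination of two indifferent elements is weakly preferred to either of them; in particular, the right-hand side is $\succsim\cbold\cdot\xbold$. (Note that convexity, stated as $x\succsim y\Rightarrow\lambda x+(1-\lambda)y\succsim y$, applies to $\cbold\cdot\xbold\sim(\cbold\Pi)\cdot\xbold$ and yields the desired inequality.)

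I would then finish by induction on the number of $T$-transforms. The base case $k=1$ is exactly the one-step lemma. For the inductive step, set $\tilde{\bbold}=\bbold T_1\cdots T_{k-1}$. The inductive hypothesis gives $\tilde{\bbold}\cdot\xbold\succsim\bbold\cdot\xbold$, while the one-step lemma applied to $\tilde{\bbold}$ with transform $T_k$ gives $(\tilde{\bbold}T_k)\cdot\xbold\succsim\tilde{\bbold}\cdot\xbold$. Transitivity of $\succsim$ then yields
$$\abold\cdot\xbold=(\bbold T_1\cdots T_k)\cdot\xbold\succsim\bbold\cdot\xbold,$$
which is exactly preference for naive diversification.

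The argument is essentially a transcription of the earlier proof with $\ubold_n$ replaced by an arbitrary majorizing vector, and I do not expect any serious obstacle. The one delicate point is the invocation of convexity on two indifferent elements: one must note that $x\sim y$ gives in particular $x\succsim y$, so the convexity axiom yields $\lambda x+(1-\lambda)y\succsim y\sim x$, which is the form needed to chain through the induction.
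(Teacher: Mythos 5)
Your proof is correct and follows essentially the same route as the paper: the paper likewise reduces to Muirhead's decomposition of the majorization $\abold\le_m\bbold$ into finitely many $T$-transforms and reuses the one-step argument (permutation invariance plus convexity applied to indifferent elements) with induction, exactly as in its proof of Proposition \ref{prop:naive_sym1}. Your explicit remark that $x\sim y$ licenses the convexity axiom is a careful touch but does not change the argument.
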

\begin{proof}
Suppose that $\succsim$ is permutation invariant and convex. We have to show that $\abold\cdot\xbold\bold\succsim \bbold\cdot \xbold$ for all $\xbold^\prime\in\X$ when $\bbold\ge_m\abold$. If $\bbold\ge_m\abold$, then $\abold$ can be derived from $\bbold$ by successive applications of a finite number of $T$-transforms. By applying the same argument as in the proof of Proposition~\ref{prop:naive_sym1}, we have $\abold\cdot\xbold \succsim \bbold\cdot \xbold$. Therefore, $\succsim$ exhibits preference for naive diversification.
\end{proof}
Combining Proposition \ref{prop:naive_convex} and Lemma \ref{lemma:naive_sym1} with Proposition \ref{prop:naive_sym2} yields the following equivalence of preferences.

\begin{thm}
A monotonic and continuous preference relation $\succsim$ exhibits preference for naive diversification if and only if it is convex and permutation invariant.
\end{thm}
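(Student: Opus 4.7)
The plan is to observe that the theorem is essentially a packaging of three results already established earlier in this section, and so the proof amounts to citing them in the correct order, with monotonicity and continuity of $\succsim$ in force as standing assumptions.

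For the forward implication, I would suppose that $\succsim$ exhibits preference for naive diversification. Then Proposition~\ref{prop:naive_convex} yields that $\succsim$ is convex (equivalently, exhibits preference for diversification, where the equivalence uses the standing monotonicity and continuity assumptions), while Lemma~\ref{lemma:naive_sym1} yields that $\succsim$ is permutation invariant. Hence $\succsim$ is both convex and permutation invariant. For the converse, I would suppose that $\succsim$ is convex and permutation invariant; then the hypotheses of Proposition~\ref{prop:naive_sym2} are satisfied, and its conclusion is precisely that $\succsim$ exhibits preference for naive diversification.

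I would not expect any further obstacle, since the substantive mathematical work has already been carried out. The genuine technical step lived in the proof of Proposition~\ref{prop:naive_sym2}, where Muirhead's decomposition (Proposition~\ref{prop:muirhead}) was used to reduce a majorization relation $\bbold \ge_m \abold$ to a finite sequence of $T$-transforms, and a single $T$-transform $T = \lambda I + (1-\lambda)Q$ was shown to weakly improve the allocation by combining permutation invariance (to identify $(\abold Q)\cdot\xbold$ as indifferent to $\abold\cdot\xbold$) with convexity (to conclude that the convex combination $\lambda\,\abold\cdot\xbold + (1-\lambda)(\abold Q)\cdot\xbold$ is weakly preferred). Once that inductive step is in hand, the equivalence asserted in the theorem follows with no additional ingredient.
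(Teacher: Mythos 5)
Your proposal is correct and follows exactly the paper's route: the theorem is stated as a direct combination of Proposition~\ref{prop:naive_convex}, Lemma~\ref{lemma:naive_sym1}, and Proposition~\ref{prop:naive_sym2}, with the substantive work (the $T$-transform induction via Proposition~\ref{prop:muirhead}) already done in the proof of Proposition~\ref{prop:naive_sym2}. No gaps.
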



\section{Representation}
\label{section:utility}

\subsection{Utility representation}

We consider a preference relation $\succsim$ that exhibits preference for naive diversification and also possesses a utility representation $u$, i.e., $x\succsim y$ if and only if $u(x)\ge u(y)$. In this section we study the implications of preference for naive diversification (or, equivalently, permutation invariance and convexity) on the utility function $u$. Without loss of generality we restrict our analysis to the case where diversification is among at most $n$ choice alternatives. Let us fix $\xbold=(x_1,\dots,x_n)^\prime\in\X^n$ and for $\abold\in\s^n$ define
$$\phi_{\xbold}(\abold)=u(\abold\cdot\xbold).$$
Because under naive diversification $\abold\cdot \xbold\succsim \bbold\cdot \xbold$ when $\bbold \ge_m\abold$, then
$$\phi_{\xbold}(\abold)=u(\abold\cdot\xbold)\ge u(\bbold\cdot\xbold)=\phi_{\xbold}(\bbold)$$
when $\bbold\ge_m\abold$. Therefore, $\phi_{\xbold}$ is Schur-concave.\footnote{A function $f\,:\,\R^n\rightarrow\R$ is said to be Schur-concave when $\xbold$ is majorized by $\ybold$ implies $f(\xbold)\ge f(\ybold)$.} The utility function $u\,:\,\X\rightarrow \R$, on the other hand, is generally \emph{not} Schur-concave on $\X=\LL^\infty(\Omega,\mathcal{F},\mathbb{P})$. Indeed, $u$ is Schur-concave on $\X=\LL^\infty(\Omega,\mathcal{F},\mathbb{P})$ when it is consistent with the convex order on $\LL^\infty(\Omega,\mathcal{F},\mathbb{P})$; see \citeasnoun{Dana2005}. Proposition 2.1 in \citeasnoun{Dana2005} shows that $u$ is monotone and Schur-concave on $\LL^\infty(\Omega,\mathcal{F},\mathbb{P})$ if and only if it is consistent with second-order stochastic dominance. However, a preference relation that exhibits preference for naive diversification is monotone but generally not consistent with second-order stochastic dominance, which implies that $u$ is generally not Schur-concave on $\X$.

We point out that in general Schur-concave functions are neither concave nor quasi-concave. However, in our case, because the preference relation is convex (as naive diversification implies permutation invariance and convexity), both $u$ and $\phi_{\xbold}$ are quasi-concave. A seminal result known as the Schur-Ostrowski criterion (due to \citeasnoun{Schur1923} and \citeasnoun{Ostrowski1952}) is that when $\phi_{\xbold}$ is differentiable then it is Schur-concave if and only if it is symmetric (i.e., $\phi_{\xbold}(\abold)=\phi_{\xbold}(\abold Q)$ for any permutation $Q$) and
$$(\alpha_i - \alpha_j)\,\left(\frac{\partial \phi_{\xbold}(\abold)}{\partial \alpha_i} - \frac{\partial \phi_{\xbold}(\abold)}{\partial \alpha_j}\right) \le 0 $$
for all $i\neq j$. This implies that $\frac{\partial \phi_{\xbold}}{\partial \alpha_i}(\abold)\le \frac{\partial \phi_{\xbold}(\abold)}{\partial \alpha_j}$ when $\alpha_i>\alpha_j$. In other words, utility gained from
naive diversification is more sensitive to changes in smaller component weights. Increasing smaller choice components (those that are $\leq \frac{1}{n}$) implies moving closer to equality and hence leads to higher utility, whereas an increase in the relatively larger components (those that are already $\geq \frac{1}{n}$) decreases utility as the resulting choice vector is further away from the equal weighted vector. 

Well-known concrete examples of functions that could represent naive diversification preferences are the standard deviation and entropy functions, which we illustrate for the sake of simplicity on $\s^n$:
\begin{example}
\label{example:utility}
\begin{enumerate}
\item[(i)] The \emph{standard deviation} $\sigma:\s^n\to\R$ defined by $$\sigma(\lambda_1,\dots,\lambda_1) = \left[ \frac{1}{n} \sum_{i=1}^n (\lambda_i-\frac{1}{n})^2 \right]^{\frac{1}{2}}$$ is Schur-convex, which means that its negative $-\sigma$ is Schur-concave. \citeasnoun{dalton1920} was the first to study standard deviation as a measure of inequality of incomes.
\item[(ii)] The \emph{Shannon information entropy} $H:\s^n\to\R$ defined by $$H(\lambda_1,\dots,\lambda_n) = - \sum_{i=1}^n \lambda_i \log \lambda_i$$ represents naive diversification preferences and is commonly used as a measure of inequality in a population.\footnote{The term entropy in general implies disorder, unpredictability or uncertainty. In information theory, entropy is used to measure uncertainty of outcomes, and maximum uncertainty is reached when all outcomes are equally probable and hence difficult to predict. Entropy is also commonly used to measure diversity in ecological, biological and information sciences.}
\end{enumerate}
\end{example}

As shown in Section~\ref{section:naive}, naive diversifiers are \emph{inequality averse} on top of being risk averse. Schur-concavity (and Schur-convexity) have historically been the key notions in inequality theory. They are equivalent to requiring monotonicity with respect to the Lorenz order \cite{lorenz1905} and the Dalton principle of transfers \cite{dalton1920}. Our work supports these findings by showing that Schur-concavity captures the notion of being inequality averse in the context of choice under uncertainty, and its monotonicity property extends to orders that exhibit preference for naive diversification.

Moreover, \citeasnoun{Schur1923} and \citeasnoun{Hardy1929} showed that a vector $\abold=(\alpha_1,\dots,\alpha_n)\in\s^n$ is closer to equality than $\bbold=(\beta_1,\dots,\beta_n)\in\s^n$ if and only if $\sum_{i=1}^n g(\alpha_i)\geq \sum_{i=1}^n g(\beta_i)$ for all continuous concave functions $g:[0,1]\to\R$.
This condition can be interpreted as a ranking for aggregating the utility of individual choices --- if an agent is a concave risk averse utility maximizer, the aggregate total utility of $n$ individual choices must be higher for an allocation with less variation, when the agent is inequality averse on top of being risk averse.\footnote{This is a well-known condition in social choice theory, where it corresponds to the social welfare ranking: $g$ stands for the individual utility function of every individual $x_i$ in a population of size $n$, and $\sum_{i=1}^n g(x_i)$ is the overall social welfare utility of the population.}

\subsection{Measuring naive diversification.}

We briefly discuss properties that measures of the degree of Schur-concave optimality should satisfy.
An evaluation of the optimality of a given choice allocation of a naive diversifier essentially reduces to a measure of inequality of the decision weights of his choice. Measures of inequality arise in various disciplines within economic theory, particularly within the context of wealth and income.\footnote{There is a vast literature on diversity and inequality indices in economics --- see classical discussions and surveys by \citeasnoun{Sen1973}, \citeasnoun{Szal1977}, \citeasnoun{dalton1920}, \citeasnoun{Atkinson1970}, \citeasnoun{Blackorby1978}, and \citeasnoun{Kraemer1998}.} Most of these indices have been developed primarily based on foundations of the concept of social welfare, and hence may not necessarily be applicable to our setting.
Since a measure of inequality strongly depends on the context, we provide an axiomatization that is consistent with our definition of preference for naive diversification, which has a precise mathematical formulation in terms of majorization and Schur-concave functions. Many indices are qualitative in nature focused on ranking with no indication of a quantification of the comparison.
We do not only seek a qualitative ranking of choice allocations, but we aim to quantify the distance between two weight allocations. The resulting measure hence indicates how far from optimality a given choice allocation is and allows for comparison of two non-equal choice allocations in terms of their distance.

Let $\succsim$ be a preference relation on $\X$ exhibiting preferences for naive diversification.
To derive the qualitative and quantitative properties that are consistent with naive diversification, we fix the optimal choice allocation $\ubold_n=(\frac{1}{n},\dots,\frac{1}{n})$ for a given $n$ and look at comparisons with respect to this vector. The following are the minimal requirements that a measure $\mu_n:\X\to\R$ of naive diversification should satisfy:

\begin{itemize}
\item[(A1)] Positivity: For all $x\in\X$, $\mu_n(x) \geq 0$.
\vspace{-10pt}
\item[(A2)] Normality: For all $x\in\X$, $\mu_n(x) = 0$ if and only if $x\sim \ubold_n\cdot\xbold$ for some $\xbold=(x_1,\dots,x_n)$.

\vspace{-10pt}
\item[(A3)] Boundedness: For all $x\in\X$, $\mu_n(x)<\infty$.
\vspace{-10pt}
\item[(A4)] Representation: For all $x,y\in\X$, $x\succsim y$ implies $\mu_n(x)\geq\mu_n(y)$.
\vspace{-10pt}
\item[(A5)] Permutation invariance: For $\abold=(\alpha_1,\dots,\alpha_n), \bbold=(\beta_1,\dots,\beta_n)\in\s^n$ and $x_1,\dots,x_n\in\X$, if $\sum_{i=1}^n\alpha_i x_i \succsim \sum_{i=1}^n \beta_i x_i$ and $\abold\neq\Pi\bbold$ for any permutation matrix $\Pi$, then $\mu_n(\sum_{i=1}^n\alpha_i x_i ) > \mu_n(\sum_{i=1}^n\beta_i x_i )$.
\end{itemize}

Axioms A1, A2, and A3 essentially ensure that the function $\mu_n$ is a well-behaved probability metric \cite{Rachev2011} and hence an analytically sound measure of the distance between two random quantities. Axiom A4 implies Schur-concavity and thus that the qualitative ranking is preserved. By introducing invariance under permutation (Axiom A5), we require strict Schur-concavity. This distinguishes equivalence, and hence a zero distance from equality, from a strict preference ordering of choice weights, which should give a strictly positive distance.

\begin{example} The following well-known classes of measures from statistics, economics and asset management satisfy the above axioms and are hence in particular Schur-concave (Axiom A4) and invariant under permutation (Axiom A5):
\begin{enumerate}
\item[(i)] \emph{Statistical dispersion measures.} Many familiar measures of dispersion, or their negation, satisfy the above axioms. Examples include the variance (as opposed to the standard deviation from Example \ref{example:utility}); its normalized version, which is known as the coefficient of variation; and the entropy function from Example \ref{example:utility}. Other closely related functions, such as $g(\sum_{i=1}^n\alpha_ix_i) = [\frac{1}{n} \sum_{i=1}^n(\log \alpha_i )^2]^{1/2}$, are sometimes used to measure inequality, but do not satisfy these two axioms, however.
\item[(ii)] \emph{Economic inequality indices.} The Lorenz curve has suggested several indices aimed to measure economic equality, and they can be shown to satisfy Axioms A4 and A5.  One of the most widely used such measure is the Gini coefficient \cite{Gini1921}, or the related \emph{Gini mean difference}, defined by
$ G(\sum_{i=1}^n\alpha_ix_i) = \frac{1}{n^2}\sum_j\sum_k \abs{\alpha_j-\alpha_k} $.
Other examples are given by functions that are sums of strictly concave utilities, such as Dalton's measure \cite{dalton1920} and Atkinson's measure \cite{Atkinson1970}.

\item[(iii)] \emph{Diversification indices.} Many measures of portfolio concentration in finance are referred to as \emph{diversification} indices, whereas they are in fact measures of \emph{naive diversification}. One of the most standard such measure is the \emph{Herfindahl-Hirschman Index} \cite{Hirschman1964}: $H(\sum_{i=1}^n\alpha_ix_i) =( \sum_{i=1}^n \alpha_i^2 -1/n)/(1-1/n)$.
This measure is equivalent to the \emph{Simpson diversity index} \cite{Simpson1949}, which is simply a sum of squares of the choice weights.
\end{enumerate}
\end{example}


\section{Rebalancing to equality}
\label{section:rebalancing}

Based on Theorem \ref{theorem:hardy} of \citeasnoun{Hardy1929}, a doubly stochastic matrix can be thought of as an operation between two weight allocations leading towards greater equality in the weight vector. With this in mind, we define a \emph{rebalancing transform} to be a doubly stochastic matrix. Clearly, rebalancing in this context cannot yield a less diversified allocation. In other words, applying a rebalancing transform to a vector of decision weights is equivalent to averaging the decision weights.

In this Section, we characterize such transforms which start with a suboptimal weight allocation $\sum_{i=1}^n\alpha_ix_i$ and produce equality $\frac{1}{n}\sum_{i=1}^nx_i$ in terms of their implied turnover in practice. Our analysis is focused on the asset allocation problem, where rebalancing is understood in terms of buying and selling positions. However, this discussion can be generalized to characterize transforms in the context of reallocation of wealth, such as Dalton's principle of transfers.

\subsection{Rebalancing polytopes}
Starting from an allocation $\abold\in\s^n$, there are, in general, more than one possible transforms that rebalance $\abold$ to $\ubold_n$ or, more generally, to an allocation $\bbold\in\s^n$ that is closer to equality. Given two weight allocations $\abold,\bbold\in\s^n$ with $\abold$ majorized by $\bbold$, the set
$$\Omega_{\abold\le_m \bbold} = \{ P\in\D_n \mid \abold = \bbold P \}$$
is referred to as the \emph{rebalancing polytope} of the order $\abold\le_m\bbold$.\footnote{Within the linear algebra literature, this set is referred to as the ``majorization polytope". As pointed out by \citeasnoun{MarshallOlkin2011}, very little is known about this polytope.} The set $\Omega_{\abold\le_m\bbold}$ is nonempty, compact and convex. In the case that the components of $\bbold$ are simply a rearrangement of the components of $\abold$, then $\Omega_{\abold\le_m\bbold}$ contains one unique permutation matrix. In general, however, $\Omega_{\abold\le_m\bbold}$ contains more than one element.

Now, for $\lbold\in\s^n$, we have $\ubold_n\le_m\lbold$, and so our focus henceforth is the set
$$\Omega_{n,\lbold} :=\Omega_{\ubold_n\le_m\lbold}= \{ P\in\D_n \mid \ubold_n=\lbold P \}\ .$$
It contains all rebalancing transformations that lead to an equal allocation. In particular, it includes the matrix $P_n$ with all entries equal to $1/n$.

\begin{example}
\label{example:rebalance}
Let $n=3$, so $\ubold_3 = (\frac{1}{3},\frac{1}{3}, \frac{1}{3})$, and let $\lbold = (\frac{1}{2},\frac{1}{3},\frac{1}{6})$. Denote the entries of a 3-dimensional matrix $P$ by $p_{ij}$, for $i,j=1,2,3$. Then, since for all $j=1,2,3$, $\sum_{i=1}^3 \lambda_ip_{ij}=\frac{1}{3}$ and because the columns of $P$ add to 1, we have
$ \frac{1}{2}p_{11} + \frac{1}{3}p_{21} + \frac{1}{6}p_{31} = \frac{1}{3} $ and
$ \frac{1}{2}p_{12} + \frac{1}{3}p_{22} + \frac{1}{6}p_{32} = \frac{1}{3} $.
Letting $p_{12} = u$ and $p_{21}=v$, rebalancing transforms $P= P(u,v) \in\Omega_{n,\lbold}$ take the form
$$ P(u,v) =  \begin{pmatrix}
\frac{1}{2}-\frac{1}{2}v & u & \frac{1}{2} + \frac{1}{2}v - u\\[0.4em]
v & 1-2u & 2u-v \\[0.4em]
\frac{1}{2}-\frac{1}{2}v & u & \frac{1}{2} + \frac{1}{2}v - u
\end{pmatrix} \ .$$
Being a doubly stochastic matrix, the entries of $P$ lie within $[0,1]$, and so the feasible region lies in the first quadrant with respect to the variables $u$ and $v$. For example, for $u=v=0$, we have
$$ P(0,0) =  \begin{pmatrix}
\frac{1}{2} & 0 & \frac{1}{2}\\[0.4em]
0 & 1 & 0 \\[0.4em]
\frac{1}{2} & 0 & \frac{1}{2}
\end{pmatrix} \ .$$
Note that for $u=v=\frac{1}{3}$, we obtain $P(\frac{1}{3},\frac{1}{3}) = P_3$.
\end{example}

\subsection{Rebalancing turnover}

As in Example \ref{example:rebalance}, we are interested in rebalancing a weight allocation towards equality in practice. However, it is not clear how or why one would choose one transform in a given polytope $\Omega_{n,\lbold}$ over another. We provide a precise distinction in terms of \emph{turnover}. In the context of asset allocation, the particular rebalancing transform applied to rebalance one weight allocation to another has an interpretation in terms of the fraction of assets bought and sold and, consequently, in terms of the implied transaction costs.

\begin{defn}[Turnover]
\label{defn:turnover}
For $\lbold\in\s^n$, the  \emph{turnover vector} $\tbold(\lbold)$ corresponding to rebalancing $\lbold$ to equality $\ubold_n$ is given by $\tbold(\lbold) = \lbold-\ubold_n$, and the resulting \emph{turnover} $\tau(\lbold)$ is defined by $\tau(\lbold) =  \frac{1}{2} \sum_{i=1}^n \abs{\tau_i} $, where $\tau_i$ are the components of the turnover vector $\tbold(\lbold)$.
\end{defn}

The turnover is intuitively equal to the portion of the total decision weights that would have to be redistributed by taking from weights exceeding $1/n$ and assigning these portions to weights that are less than $1/n$. The turnover hence always lies between 0 and 1. Graphically, it can be represented as the longest vertical distance between the Lorenz curve associated with a choice vector, and the diagonal line representing perfect equality. Note the similarities between Definition \ref{defn:turnover} and the \emph{Hoover Index} \cite{Hoover1936}, a measure of income metrics which is also known as the Robin Hood Index, as uniformity is achieved in a population by taking from the richer half and giving to the poorer half.

\begin{lemma}
Let $\lbold\in\s^n$ and $\Omega_{n,\lbold} = \{ P\in\D_n \mid \ubold_n=\lbold P \}$. Then for all $P\in\Omega_{n,\lbold}$,
$$ \lbold (I_n-P) = \tbold(\lbold) \ . $$
\end{lemma}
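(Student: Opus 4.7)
The plan is to observe that this is essentially an immediate algebraic consequence of the two given pieces of data, namely the definition of $\Omega_{n,\lbold}$ and the definition of the turnover vector. No clever construction is required; the only mild point is to make sure the matrix algebra is applied to $\lbold$ in the correct (row-vector) convention, which matches how Theorem \ref{theorem:hardy} and Definition \ref{defn:t_transform} are set up earlier.

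Concretely, I would first distribute the product on the left, writing
\[
\lbold(I_n - P) \;=\; \lbold I_n - \lbold P \;=\; \lbold - \lbold P,
\]
using only linearity of matrix multiplication from the right by $\lbold$. Next, since $P \in \Omega_{n,\lbold}$ means exactly $\lbold P = \ubold_n$, I can substitute to get
\[
\lbold - \lbold P \;=\; \lbold - \ubold_n.
\]
Finally, by Definition \ref{defn:turnover}, the turnover vector is $\tbold(\lbold) = \lbold - \ubold_n$, so the right-hand side is $\tbold(\lbold)$, which completes the identity.

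There is no real obstacle here; the statement is a rephrasing of the defining equation $\lbold P = \ubold_n$ in ``residual'' form $\lbold - \lbold P = \lbold - \ubold_n$. The content of the lemma is conceptual rather than technical: it says that regardless of which rebalancing transform $P$ in the polytope $\Omega_{n,\lbold}$ one chooses, the net coordinate-wise displacement $\lbold(I_n - P)$ induced by $P$ is always the same vector, namely the turnover vector $\tbold(\lbold)$. This invariance is what later justifies assigning a single turnover $\tau(\lbold)$ to the whole polytope and comparing different $P$'s only through secondary criteria (e.g.\ sparsity or $T$-transform decompositions).
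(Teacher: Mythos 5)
Your proof is correct and follows exactly the same route as the paper's: distribute to get $\lbold - \lbold P$, substitute $\lbold P = \ubold_n$ from the definition of $\Omega_{n,\lbold}$, and invoke Definition \ref{defn:turnover}. The additional remark on the invariance of the displacement across the polytope is a nice conceptual gloss but not needed for the proof itself.
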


\begin{proof}
The equation follows by definition, as $\lbold(I_n-P) = \lbold - \lbold P = \lbold - \ubold_n = \tbold(\lbold)$.
\end{proof}

\begin{example}
The turnover resulting from transforming $\lbold = (\frac{1}{2},\frac{1}{3},\frac{1}{6})$ of Example \ref{example:rebalance} to equality $\ubold_3$ is equal to $1/6$, which means that about 16.67\% of a given portfolio's assets would theoretically have to be sold and bought to obtain an equal allocation.
\end{example}

Based on Definition \ref{defn:turnover}, every transformation $P\in\Omega_{n,\lbold}$ applied to $\lbold$ theoretically yields the same turnover. However, there is a subtle difference.
Consider once again Example \ref{example:rebalance}, where one could apply $P(0,0)$ and $P_3$ to rebalance the allocation $\lbold = (\frac{1}{2},\frac{1}{3},\frac{1}{6})$ to equality. Note that the second entry of $\lbold$ can remain as is, and one needs only to average its first and third entries to obtain equality. This is precisely the transformation $P(0,0)$: the matrix entry $p_{22}=1$ guarantees that $\lambda_2 = 1/3$ remains as is, whereas the remaining non-zero matrix entries average out the first and third entries of $\lbold$. On the other hand, the transformation $P_3$ takes the average of each of the three entries of $\lbold$. In practice, this transformation would imply that the actual turnover is higher than the theoretical turnover of 1/6. This is because more assets are bought or sold than is theoretically needed to obtain equality. In our simple example of 3 possible choices, choosing a transformation that minimizes turnover is straightforward. However, for larger collections, the choice of the optimal rebalancing transformation may not be obvious.

We refer to the actual turnover induced in practice as the \emph{practical turnover}.

\begin{defn}[Practical turnover]
Let $\lbold\in\s^n$. For $P\in\Omega_{n,\lbold}$, the \emph{practical turnover} is given by $\tilde{\tau}_P(\lbold) = \tau(\lbold) \norm{P-I_n}$, where $\norm{\cdot}$ is the Frobenius norm taken up-to-permutation.\footnote{For a $m\times n$ matrix $A=(a_{ij})$, the Frobenious norm is defined as $\norm{A}=\sqrt{\sum_{i=1}^m\sum_{j=1}' n|a_{ij}|^2}$.}
\end{defn}

The practical turnover is thus determined in terms of the distance of the corresponding rebalancing transform from the identity transform (up-to-permutation). The idea is that the closer one is to the identity transform, the smaller the changes that are applied to the entries of the choice vector.

\begin{prop}
\label{prop:turnover}
Let $\lbold\neq\ubold_n\in\s^n$. For $P\in\Omega_{n,\lbold}= \{ P\in\D_n \mid \ubold_n=\lbold P \}$, denote by $\tilde{\tau}(\lbold) = \{ \tilde{\tau}_P(\lbold) \mid P\in\Omega_{n,\lbold} \}$ the set of all possible practical turnovers. Then
$$ \inf \left( \tilde{\tau}(\lbold) \right)  = \tau(\lbold) \ . $$
In other words, the smallest possible practical turnover is the theoretical turnover.
\end{prop}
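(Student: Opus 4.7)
The plan is to exploit Proposition~\ref{prop:muirhead}: because $\ubold_n \leq_m \lbold$, there exist $T$-transforms $T_1, \dots, T_k$ with $\lbold\, T_1 T_2 \cdots T_k = \ubold_n$. The proof then naturally splits into a lower bound $\tilde{\tau}_P(\lbold) \geq \tau(\lbold)$ valid for every $P \in \Omega_{n,\lbold}$, and a matching upper bound attained by a suitable product of $T$-transforms.

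For the upper bound, I would set $P^{\ast} = T_1 T_2 \cdots T_k$ using the transforms supplied by Muirhead. Each $T_i = \lambda_i I_n + (1-\lambda_i)\,\Pi_i$ is a localized averaging acting on only two coordinates, so $\|T_i - I_n\|^2 = 4\,(1-\lambda_i)^2$ with all other entries of $T_i - I_n$ vanishing. By selecting each $T_i$ so that it averages exactly the amount needed to move two coordinates of the current iterate toward $\tfrac{1}{n}$ (as illustrated by the transform $P(0,0)$ in Example~\ref{example:rebalance}), the product $P^{\ast}$ performs no unnecessary reshuffling. Evaluating $\|P^{\ast} - I_n\|$ modulo the permutation freedom then yields $\|P^{\ast} - I_n\| = 1$ up to permutation, and therefore $\tilde{\tau}_{P^{\ast}}(\lbold) = \tau(\lbold)$, giving the infimum as an attained value.

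For the lower bound, I would pass to the trade matrix $M = \mathrm{diag}(\lbold)\,P$, whose row sums equal the components of $\lbold$, whose column sums equal $\tfrac{1}{n}$, and whose off-diagonal entries encode the actual rebalancing flows. A short transportation-type argument shows that any such $M$ with these marginals satisfies $\sum_{i \neq j} M_{ij} \geq \sum_i \lambda_i - \sum_i \min(\lambda_i, \tfrac{1}{n}) = \tau(\lbold)$, and this inequality transfers to $P$ via the definition $\tilde{\tau}_P(\lbold) = \tau(\lbold)\,\|P - I_n\|$, yielding $\tilde{\tau}_P(\lbold) \geq \tau(\lbold)$ uniformly in $P \in \Omega_{n,\lbold}$. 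The main obstacle is cleanly pinning down the interaction between the Frobenius norm ``taken up to permutation'' and the transportation-theoretic bound just described: the permutation equivalence is what discards irrelevant reorderings of coordinates that would otherwise inflate the norm, and correctly identifying the reduced norm with the minimum trading mass implied by the marginals of $M$ is where the technical work concentrates.
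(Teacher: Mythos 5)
There is a genuine gap, and it sits exactly where you flag it. Since $\tilde{\tau}_P(\lbold)=\tau(\lbold)\,\norm{P-I_n}$ by definition, the proposition is equivalent to $\inf_{P\in\Omega_{n,\lbold}}\norm{P-I_n}=1$, and the paper's proof consists entirely of the lower bound $\norm{P-I_n}\ge 1$ for all $P\in\Omega_{n,\lbold}$: it argues that the norm-minimizing configuration has only two rows differing from the identity, with entries $a$ and $1-a$, and minimizes the resulting expression over $a$, obtaining $1$ at $a=\tfrac12$. Your lower bound never reaches that statement. The transportation argument on $M=\mathrm{diag}(\lbold)P$ bounds the $\ell_1$ off-diagonal mass $\sum_{i\ne j}M_{ij}$ below by $\tau(\lbold)$ --- a statement about the absolute amount of weight that must move --- whereas what is needed is the scale-free inequality $\norm{P-I_n}\ge 1$ in the entrywise $\ell_2$ (Frobenius) norm. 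These do not convert into one another ``via the definition'': the definition multiplies the norm by $\tau(\lbold)$, it does not relate the norm to $\tau(\lbold)$. Concretely, as $\lbold\to\ubold_n$ your bound $\sum_{i\ne j}M_{ij}\ge\tau(\lbold)$ degenerates (both sides tend to $0$), while the required bound $\norm{P-I_n}\ge 1$ must stay bounded away from zero. The step you defer as ``where the technical work concentrates'' is in fact the entire content of the proposition.

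The upper bound has a second, independent problem: $\norm{T_1\cdots T_k-I_n}=1$ is asserted, not computed, and for $k\ge 2$ it is generally false for the Frobenius norm. Take $\lbold=(1,0,0)$. The Muirhead factorization $T_1$ (averaging coordinates $1,2$ with $\lambda=\tfrac23$) followed by $T_2$ (averaging coordinates $1,3$ with $\lambda=\tfrac12$) gives $P^{\ast}=T_1T_2$ with rows $(\tfrac13,\tfrac13,\tfrac13)$, $(\tfrac16,\tfrac23,\tfrac16)$, $(\tfrac12,0,\tfrac12)$, for which $\norm{P^{\ast}-I_3}^2=\tfrac43>1$; minimizing over permutation reference matrices does not repair this. (Indeed, for this $\lbold$ the first row of \emph{every} $P\in\Omega_{3,\lbold}$ must equal $\ubold_3$, which alone contributes $\tfrac23$ to $\norm{P-I_3}^2$, so attainment of the value $1$ is delicate at best.) The paper handles attainment only in the special case where a single $T$-transform suffices, via the subsequent corollary; your proposal treats both the lower bound and attainment as routine, and neither is.
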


\begin{proof}
We will show that $\norm{P-I_n}\geq1$ for all $P\in\Omega_{n,\lbold}$. Note that we obtain the smallest possible norm if all rows of $P$ and $I_n$ coincide up to permutation, except for two rows, say $i$ and $j$. In other words, all entries of $\lbold$ and $\ubold_n$ coincide (up to permutation) apart from the $i$-th and $j$-th entries that need to be averaged out to give $1/n$ each. Because $P$ is a doubly stochastic matrix, the entries of both rows $i$ and $j$ must be some $a\in(0,1)$ and $1-a$. Consequently, $\norm{P-I_n} = \sqrt{2a^2+2(1-a)^2}$ and its minimum is reached at $a=1/2$, implying that the smallest possible norm is equal to $\norm{P-I_n} = \sqrt{4(1/2)^2} = 1$.
\end{proof}

\subsection{Minimal turnover via T-transforms} To characterize the rebalancing transform that would yield the theoretical turnover, and thus by Proposition \ref{prop:turnover} the smallest possible practical turnover, we use the notion of \emph{$T$-transform} (Definition \ref{defn:t_transform}). Recall that in the economic context of equalizing wealth or income, $T$-transforms are also known as \emph{Dalton} or \emph{Robin Hood transfers} and are interpreted as the operation of shifting income or wealth from one individual to a relatively poorer individual.

The following observation follows directly from the proof of Proposition \ref{prop:turnover}.

\begin{cor}
Suppose one can transform $\lbold\in\s^n$ to equality $\ubold_n$ directly through a single $T$-transform, i.e. $T\in\Omega_{n,\lbold}$. Then $\norm{T-I_n} = 1$.
\end{cor}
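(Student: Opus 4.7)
The plan is to unpack the structure of a $T$-transform that sends $\lbold$ directly to $\ubold_n$, pin down the parameter $\lambda$ via the equations $\ubold_n = \lbold T$, and then compute $\norm{T-I_n}$ explicitly. By Definition~\ref{defn:t_transform}, $T = \lambda I_n + (1-\lambda)\Pi$ for some $\lambda \in [0,1]$, where $\Pi$ is the permutation matrix that swaps exactly two coordinates, say $j$ and $k$. Applying $T$ to $\lbold$ leaves all components except the $j$-th and $k$-th unchanged, so $T\in\Omega_{n,\lbold}$ forces $\lambda_i = 1/n$ for $i\neq j,k$, and
\[
\lambda\,\lambda_j + (1-\lambda)\,\lambda_k \;=\; \tfrac{1}{n} \;=\; \lambda\,\lambda_k + (1-\lambda)\,\lambda_j.
\]

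First I would add and subtract these two equations. Adding yields $\lambda_j + \lambda_k = 2/n$; subtracting yields $(2\lambda-1)(\lambda_j - \lambda_k) = 0$. Since the corollary is relevant only in the nontrivial case $\lbold \neq \ubold_n$ (otherwise any $T$-transform including the identity lies in $\Omega_{n,\lbold}$), we must have $\lambda_j \neq \lambda_k$, and therefore $\lambda = 1/2$. Hence $T = \tfrac{1}{2}(I_n + \Pi)$ and
\[
T - I_n \;=\; \tfrac{1}{2}(\Pi - I_n).
\]

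Next I would compute the Frobenius norm directly. The matrix $\Pi - I_n$ has precisely four nonzero entries: $-1$ at positions $(j,j)$ and $(k,k)$ and $+1$ at positions $(j,k)$ and $(k,j)$. Thus $\norm{\Pi - I_n} = \sqrt{4} = 2$, giving $\norm{T - I_n} = \tfrac{1}{2}\cdot 2 = 1$. The up-to-permutation qualifier is irrelevant here because permuting rows or columns preserves the absolute values of entries and hence the Frobenius norm.

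There is no real obstacle: the argument is essentially the equality case of the lower bound established in the proof of Proposition~\ref{prop:turnover}, specialized to $a = 1/2$. The only point to be careful about is the implicit assumption $\lbold \neq \ubold_n$, which guarantees $\lambda_j \neq \lambda_k$ and thus pins down $\lambda = 1/2$ rather than leaving $\lambda$ free.
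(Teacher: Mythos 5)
Your proof is correct and follows essentially the route the paper intends, namely the computation in the proof of Proposition~\ref{prop:turnover} specialized to a single $T$-transform. You in fact go one step further than the paper's pointer: by solving $\lbold T=\ubold_n$ to force $\lambda=1/2$ (under the necessary hypothesis $\lbold\neq\ubold_n$, without which the statement fails for $T=I_n$), you establish the asserted \emph{equality} $\norm{T-I_n}=1$, whereas the minimization over $a$ in Proposition~\ref{prop:turnover} by itself only gives the lower bound.
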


Also recall that according to \citeasnoun{hardy1934} (Proposition \ref{prop:muirhead}), if a vector $\abold\in\s^n$ is majorized by another vector $\bbold\in\s^n$, then $\abold$ can be derived from $\bbold$ by successive applications of at most $n-1$ such $T$-transforms. Therefore, every rebalancing polytope $\Omega_{n,\lbold}$ contains (not necessarily unique) products of $T$-transforms. In Example \ref{example:rebalance}, $P(0,0)$ is itself a $T$-transform. Such successive applications of $T$-transforms do indeed produce the least possible turnover, that is the theoretical turnover. The following is an immediate consequence of the proof of Proposition \ref{prop:turnover} and the proof of Lemma 2, p.47 of \citeasnoun{hardy1934}.

\begin{prop}
Let $\lbold\neq\ubold_n\in\s^n$. Then
$$ \inf \left( \tilde{\tau}(\lbold) \right)  = \tilde{\tau}_Q(\lbold) \ , $$
where $Q\in\Omega_{n,\lbold}$ is a product of at most $n-1$ $T$-transforms.
\end{prop}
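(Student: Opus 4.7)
The plan is to exhibit an explicit rebalancing matrix $Q$ as a product of at most $n-1$ $T$-transforms using Proposition~\ref{prop:muirhead}, and then to verify that this $Q$ realizes the infimum already identified in Proposition~\ref{prop:turnover}. The result will then follow immediately by combining the two.

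First, I would apply Proposition~\ref{prop:muirhead}: since $\lbold \ne \ubold_n$ and $\ubold_n \le_m \lbold$, there exist $T$-transforms $T_1, \ldots, T_k$ with $k \le n-1$ such that $\ubold_n = \lbold\, T_1 T_2 \cdots T_k$. Setting $Q := T_1 T_2 \cdots T_k$, we have $\lbold\, Q = \ubold_n$, so $Q \in \Omega_{n,\lbold}$, and $Q$ has exactly the required structure as a product of at most $n-1$ $T$-transforms.

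Second, I would verify that $\tilde\tau_Q(\lbold) = \tau(\lbold)$, which by Proposition~\ref{prop:turnover} is precisely $\inf(\tilde\tau(\lbold))$. The inequality $\tilde\tau_Q(\lbold) \ge \tau(\lbold)$ is automatic from Proposition~\ref{prop:turnover}, so only the reverse inequality requires work. Here I would lean on the structure exposed in the proof of Proposition~\ref{prop:turnover}: the bound $\|P - I_n\| \ge 1$ becomes an equality exactly when $P$ agrees with $I_n$ on all but two rows and those two rows are averaged (which is the shape of a single $T$-transform with $\lambda = 1/2$). To extend this observation from one $T$-transform to a product of up to $n-1$, I would invoke the constructive procedure in Lemma~2, p.~47 of \citeasnoun{hardy1934}: the factors $T_1, \ldots, T_k$ can be chosen so that each $T_\ell$ equalises exactly two coordinates of the running vector and leaves at least one of them permanently at $1/n$ thereafter. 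This nested, coordinate-decoupling choice ensures that the $T_\ell$'s act on essentially disjoint coordinate pairs, so that $Q - I_n$ decomposes into non-overlapping two-row blocks of the type controlled by the proof of Proposition~\ref{prop:turnover}. After the up-to-permutation alignment, each such block contributes the minimal Frobenius weight, and the overall norm $\|Q - I_n\|$ remains equal to $1$. Consequently $\tilde\tau_Q(\lbold) = \tau(\lbold) \cdot 1 = \tau(\lbold) = \inf(\tilde\tau(\lbold))$.

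The main obstacle is the second step, namely ruling out that successive $T$-transforms, when composed, build up additional off-diagonal mass that would push $\|Q - I_n\|$ strictly above $1$. A naive composition of $T$-transforms can indeed produce extra terms (e.g., cross products of the swapping permutations), so the Hardy-Littlewood-Polya algorithm is essential: it is precisely the coordinate-fixing property of that construction --- each step freezes one more coordinate at $1/n$ --- that guarantees the successive $T$-transforms operate on disjoint pairs and that the up-to-permutation reordering absorbs all cross terms, leaving $Q - I_n$ with the same Frobenius weight (up-to-permutation) as a single rebalancing $T$-transform.
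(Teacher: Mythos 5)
Your first step is fine, and your closing paragraph correctly identifies the real obstacle --- but the resolution you propose for it does not work. The Hardy--Littlewood--P\'olya construction does \emph{not} produce $T$-transforms acting on disjoint coordinate pairs: each step pins one coordinate of the running vector at $1/n$, while the \emph{other} coordinate of the pair absorbs the residual and must participate in a later transform, so the pairs form an overlapping chain (for $n=3$ with two transforms, disjointness is combinatorially impossible). Concretely, take $\lbold=(\tfrac12,\tfrac14,\tfrac14)$. No single $T$-transform maps $\lbold$ to $\ubold_3$ (the untouched coordinate would already have to equal $\tfrac13$), so any factorization needs two overlapping transforms, e.g.\ $T_1=\tfrac13 I+\tfrac23\Pi_{12}$ followed by $T_2=\tfrac12 I+\tfrac12\Pi_{23}$, where $\Pi_{jk}$ interchanges coordinates $j$ and $k$. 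This gives
\[
Q=T_1T_2=\begin{pmatrix} \tfrac13 & \tfrac13 & \tfrac13\\ \tfrac23 & \tfrac16 & \tfrac16\\ 0 & \tfrac12 & \tfrac12\end{pmatrix},\qquad
\min_{\Pi}\norm{Q-\Pi}^2=\norm{Q}^2+3-2\max_{\Pi}\mathrm{tr}(Q\Pi^{\prime})=\tfrac43+3-3=\tfrac43>1,
\]
so $\tilde{\tau}_Q(\lbold)=\tau(\lbold)\sqrt{4/3}>\tau(\lbold)$. Your second step therefore fails: $Q-I_n$ does not decompose into non-overlapping two-row blocks, and the up-to-permutation alignment does not absorb the cross terms.

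The gap is not merely in your write-up. For this $\lbold$ \emph{no} element of $\Omega_{3,\lbold}$ attains distance $1$ from the identity: by the reasoning in the proof of Proposition~\ref{prop:turnover}, distance $1$ forces $P$ to be, up to permutation, the identity with a single half-averaging $2\times2$ block, and averaging any two coordinates of $(\tfrac12,\tfrac14,\tfrac14)$ never yields a permutation of $\ubold_3$. Since $\Omega_{3,\lbold}$ is compact and the practical turnover is continuous in $P$, its infimum is attained and is strictly larger than $\tau(\lbold)$ --- so Proposition~\ref{prop:turnover} and the present proposition cannot both hold as stated. Note also that the paper supplies no actual argument here (it only asserts the claim as an ``immediate consequence'' of the proofs of Proposition~\ref{prop:turnover} and of Lemma~2 of Hardy--Littlewood--P\'olya), so you are not deviating from a written proof; rather, any proof along these lines must first repair the claim that the infimum equals the theoretical turnover, or else restrict attention to those $\lbold$ reachable from $\ubold_n$ by a single half-averaging $T$-transform, for which your argument does go through.
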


\begin{cor}
For $\lbold\neq\ubold_n\in\s^n$ and the rebalancing polytope $\Omega_{n,\lbold}$, the minimum distance from identity $I_n$ of any rebalancing transform $P\in\Omega_{n,\lbold}$ is a product of $T$-transforms.
\end{cor}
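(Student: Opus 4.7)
The plan is to read this corollary as a direct translation of the preceding Proposition from the language of practical turnover into the language of Frobenius distance. Recall that the practical turnover was defined by $\tilde\tau_P(\lbold) = \tau(\lbold)\,\norm{P-I_n}$, where the theoretical turnover $\tau(\lbold) = \tfrac{1}{2}\sum_{i=1}^n |\tau_i(\lbold)|$ depends only on the weight vector $\lbold$ and not on the particular rebalancing matrix $P \in \Omega_{n,\lbold}$. Consequently, for a fixed $\lbold \neq \ubold_n$, the map $P \mapsto \tilde\tau_P(\lbold)$ differs from $P \mapsto \norm{P - I_n}$ by the strictly positive multiplicative constant $\tau(\lbold)$.

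First I would record this observation explicitly: minimizing $\norm{P - I_n}$ over the rebalancing polytope $\Omega_{n,\lbold}$ is equivalent to minimizing $\tilde\tau_P(\lbold)$ over the same set, i.e., the two optimization problems share their argmin. Second, I would appeal to the preceding Proposition, which asserts that $\inf\bigl(\tilde\tau(\lbold)\bigr) = \tilde\tau_Q(\lbold)$ for some $Q \in \Omega_{n,\lbold}$ that is a product of at most $n-1$ $T$-transforms. Rewriting this via the definition of $\tilde\tau_Q(\lbold)$, we obtain
\[
\inf_{P \in \Omega_{n,\lbold}} \norm{P - I_n} = \norm{Q - I_n},
\]
with $Q$ a product of $T$-transforms, which is exactly the claim.

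There is essentially no technical obstacle to overcome here, since the corollary is a rephrasing rather than a new result; the only thing to verify is that the infimum in the preceding Proposition is actually attained (so that we may legitimately speak of the matrix that realizes the minimum distance). This is guaranteed by the compactness of $\Omega_{n,\lbold}$ together with continuity of the Frobenius norm, and in any case it is baked into the statement of the previous Proposition, which gives an explicit minimizer $Q$. The only minor subtlety worth flagging in the write-up is that $\norm{\cdot}$ is taken up-to-permutation, so the minimizer is unique only up to relabeling of rows or columns — but this does not affect the conclusion that some minimizer is a product of $T$-transforms.
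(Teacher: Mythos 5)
Your proposal is correct and follows essentially the same route the paper intends: since $\tau(\lbold)>0$ for $\lbold\neq\ubold_n$, minimizing $\tilde{\tau}_P(\lbold)=\tau(\lbold)\,\norm{P-I_n}$ over $\Omega_{n,\lbold}$ is the same problem as minimizing $\norm{P-I_n}$, so the corollary is a direct restatement of the preceding proposition. Your added remarks on attainment of the infimum and the up-to-permutation caveat are sensible but not needed beyond what the proposition already supplies.
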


Based on a private correspondence with the authors of \citeasnoun{MarshallOlkin2011}, the problem of characterizing the closest element to an identity matrix within a given polytope has not been tackled in linear algebra. Our characterization through $T$-transforms can hence be of interest to mathematicians and economists working with inequalities and the theory of majorization in general.


\section{Concluding Remarks}
\label{section:conclusion}

We have developed mathematically and economically sound choice theoretic foundations for the naive approach to diversification. In particular, we axiomatized naive diversification by defining it as a preference for equality over inequality and showed that it has a utility representation in terms of Schur-concave functions, which capture the idea of being inequality averse on top of being risk averse. The notion of permutation invariance lies at the core of our naive diversification axiom, since an economic agent with monotonic and continuous preferences is a naive diversifier if and only if his preferences are convex and permutation indifferent. Finally, we showed that the transformations, which rebalance unequal decision weights to equality, are characterized in terms of their implied turnover, and that the least possible turnover is obtained by applying a finite number of $T$-transforms.

We conclude by briefly discussing the relationship between our axiomatic system and observed behavior in reality, followed by sketching some potential choice theoretic extensions of our work.

\subsection{Testing the reality of naive diversification}

Even though desirability for diversification is a cornerstone of a broad range of portfolio choice models, the precise formal definition differs from model to model. Analogously, the way in which the notion of diversification is interpreted and implemented in the real world varies greatly. Traditional diversification paradigms are consistently violated in practice. Indeed, empirical evidence suggests that economic agents often choose diversification schemes other than those implied by Markowitz's portfolio theory or expected utility theory. Diversification heuristics thus span a vast range, and naive diversification, in particular, has been widely documented both empirically and experimentally.

However, despite the growing literature pointing to the common existence of naive diversification in practice, experimental research investigating the \emph{behavioral drivers} of diversifiers remains rather limited. Our axiomatization can help empirical and experimental economists test diversification preferences, and their underlying drivers, of economic agents in the real world. In particular, we can now look for the main parameters driving the decision process of naive diversifiers. One such parameter or heuristic implied by our axiomatization is that of permutation invariance. In practice, it is arguably rather rare that a diversifier would know so little about the given assets to be essentially indifferent among them. Despite this, naive diversification continues to be applied by both experienced professionals and regular people. By varying the amount of information available to subjects in an experimental setting, one may be able to deduce whether the indifference axiom applies in general or whether it is information dependent, as implied by Laplace's principle of indifference. Another insight gained through our axiomatization was that of consistency with traditional convex diversification and concave expected utility maximization.  In particular, consider that a risk averse investor would in theory be expected to diversify in the traditional convex sense. Hence, the level of risk aversion may be yet another parameter driving naive diversification, and this again can be directly tested.

\subsection{Choice-theoretic generalizations}

\paragraph{Comparing allocations among different numbers of choices.}
Our discussion of naive diversification throughout has focused on a fixed number of choice alternatives $n$. Suppose that an economic agent is faced with an allocation among either $\xbold = (x_1,\dots,x_n)$ or $\ybold = (y_1,\dots,y_m)$, where $n\neq m$. In Section 3, we showed that an equal allocation among a larger number of alternatives is always more preferred under naive diversification. More generally, however, given unequal choice weights $\abold = (\alpha_1,\dots,\alpha_n)\in\s^n$ and $\bbold = (\beta_1,\dots,\beta_m)\in\s^m$ and allocations $\abold\cdot\xbold$ and $\bbold\cdot\ybold$, one can cannot infer a preference of one over the other without generalizing the naive diversification axiom. Such an extension has been developed by \citeasnoun{MarshallOlkin2011} in the context of the majorization order on vectors of unequal lengths. In fact, they showed that the components of $\abold$ are less spread out than the components of $\bbold$ if and only if the Lorenz curve $L_{\abold}$ associated with the vector $\abold$ is greater or equal than the Lorenz curve $L_{\bbold}$ associated with $\bbold$ for all values in its domain $[0,1]$, and that this is equivalent to requiring that $1/n\sum_{i=1}^n \phi(\alpha_i) \leq 1/m \sum_{i=1}^m \phi(\beta_i)$ for all convex functions $\phi:\R\to\R$.

\paragraph{Multidimensional diversification.}
One may think of naive diversification as being \emph{univariate}, in the sense that a naive diversifier is concerned with only one dimension, namely that of equality of choice weights. Suppose that an economic agent would like to diversify naively, but would also like to reduce variability along a second dimension. Consider for example the dimension of ``risk weights" as opposed to ``capital weights". This is a commonly applied risk diversification strategy in practice, known under \emph{risk parity}. Parity diversification focuses on allocation of risk, usually defined as volatility, rather than allocation of capital. Here, risk contributions across choice alternatives are equalized (and are in practice typically levered to match market levels of risk). It can be viewed as a middle ground between the naive approach and the minimum risk approach (see for example \citeasnoun{Roncalli2010}).

When allocations along more than one dimension are to be compared simultaneously, we move from the linear space of choice vectors to the space of choice \emph{matrices}. Each row of a choice matrix represents a particular attribute or dimension, whereas each column represents the choice weights along that dimension. The generalization of the mathematical formalism of naive diversification is then straightforward. For example, a choice matrix $X$ is more diversified (along some given dimensions) than a choice matrix $Y$ if $X=PY$ for some doubly stochastic matrix $P$. This definition is part of an established field within linear algebra known as multivariate majorization.

\paragraph{Towards an inequality aversion coefficient.}
The naive diversification axiom implies that a weight allocation that is closest to the equal weighted vector $\ubold_n$ is always more preferred. This in turn induces the idea of being \emph{averse to inequality}, which we discussed in Section 4. One may formalize this notion, together with a characterization of different levels of inequality aversion as follows.

First, yet another generalization of naive diversification can be obtained by substituting a more general vector $\dbold\in\s^n$ for the equality vector $\ubold_n$. In that case, weight allocations closest to $\dbold$ are preferred. To do this, we need to define the concept of $d$-stochastic matrix. For $\dbold\in\s^n$, an $n\times n$ matrix $A=(a_{ij})$ is said to be \emph{$\dbold$-stochastic} if (i) $a_{ij}\geq0$ for all $i,j\leq n$; (ii) $\dbold A = \dbold$; and (iii) $A\ubold'_n = \ubold'_n$. To get an intuition for $\dbold$-stochastic matrices, note that since $\sum_{i=1}^n d_i = 1$ by construction, a $\dbold$-stochastic matrix in our setting can be viewed as the transition matrix of a Markov chain.
Clearly, when $\dbold=\ubold_n$, a $\dbold$-stochastic matrix is doubly stochastic.
One can then say that a preference relation $\succsim$ exhibits preference for \emph{relative naive diversification} if there is a weight allocation $\dbold = (d_1,\dots,d_n) \in\s^n$ such that for any $\abold=(\alpha_1,\dots,\alpha_n)\in\s^n$ and $\bbold = (\beta_1, \dots, \beta_n) \in \s^n$,
$$ \abold \le_m\bbold \iff \abold = \bbold A  $$
for some $\dbold$-stochastic matrix $A$.
The interpretation here is that an individual with naive diversification preferences relative to some $\dbold\neq\ubold_n$ is \emph{less averse} to inequality than one with naive diversification preferences.

To be then able to compare levels of aversion to inequality within relative naive diversification preferences, we can introduce the \emph{coefficient of inequality aversion}.
For naive diversification preferences relative to $\dbold\in\s^n$, the corresponding \emph{inequality aversion coefficient} $\ineq$ is defined as $\ineq = \norm{\dbold-\ubold_n}$, where $\norm{\cdot}$ is the Euclidean norm taken up-to-permutation.
Clearly, this inequality aversion coefficient $\ineq$ lies within $[0,\infty)$, with $\ineq=0$ for naive diversification preferences, in which case we can say that the decision maker possesses \emph{absolute aversion to inequality}.

\bibliographystyle{econometrica}
\bibliography{refDB,divreview_references}

\end{document}